\newcommand\payoff[1]{
  $\begin{pmatrix} #1 \end{pmatrix}$
}
\newcommand{\hash}{\textsf{VRF}_{sk}}
\newcommand{\vrf}[1]{\textsf{VRF}_{sk_{#1}}}
\newcommand{\verifyVRF}{\textsf{verifyVRF}_{pk}}
\newcommand{\Merklize}{\textsf{Merklize}}
\newcommand{\MerkleProof}{\textsf{MerkleProof}}
\newcommand{\verifyMerkleProof}{\textsf{verifyMerkleProof}}
\newcommand{\podgame}{\textsf{PoD-Game}\xspace}
\newcommand{\dcgame}{\textsf{DC-Game}\xspace}
\newcommand{\lcgame}{\textsf{LC-Game}\xspace}
\newcommand{\checkState}{\textsf{checkState}}
\newcommand{\true}{\textsf{true}}
\newcommand{\false}{\textsf{false}}
\newcommand{\todo}[1]{{\textcolor{red}{[TODO: #1]}}}
\newcommand{\validate}{\textsf{validate}}
\newcommand{\apply}{\textsf{apply}}
\newcommand{\hst}[1] {{\textcolor{green}{HT: #1}}}
\newcommand{\ps}[1] {{\textcolor{blue}{PS: #1}}}
\title{Proof of Diligence: Cryptoeconomic Security for Rollups}
\author{Peiyao Sheng}{University of Illinois Urbana-Champaign \and Witness Chain}{psheng2@illinois.edu}{}{}
\author{Ranvir Rana}{Witness Chain}{ranvir.rana@kaleidoscope-blockchain.com}{}{}
\author{Senthil Bala}{Witness Chain}{senthil.bala@kaleidoscope-blockchain.com}{}{}
\author{ Himanshu Tyagi}{Witness Chain}{himanshu.tyagi@kaleidoscope-blockchain.com}{}{}
\author{Pramod Viswanath}{Princeton University \and Witness Chain}{pramod.viswanath@kaleidoscope-blockchain.com}{}{}
\authorrunning{Peiyao S et al.}
\keywords{blockchain, rollup, game theory, security}
\begin{document}
\maketitle


\begin{abstract}
Layer 1 (L1) blockchains such as Ethereum are secured under an "honest supermajority of stake" assumption for a large pool of validators who verify each and every transaction on it. This high security comes at a scalability cost which not only effects the throughput of the blockchain but also results in high gas fees for executing transactions on chain. 
The most successful solution for this problem is provided by optimistic rollups, Layer 2 (L2) blockchains that execute transactions outside L1 but post the transaction data on L1. 

The security for such L2 chains is argued, informally, under the assumption that a set of nodes will check the transaction data posted on L1 and raise an alarm (a fraud proof) if faulty transactions are detected. However, all current deployments lack a proper incentive mechanism for ensuring that these nodes will do their job ``diligently'', and simply rely on a cursory incentive alignment argument for security. 

We solve this problem by introducing an incentivized watchtower network designed to serve as the first line of defense for rollups. Our main contribution is a ``Proof of Diligence'' protocol that requires watchtowers to continuously provide a proof that they have verified L2 assertions and get rewarded for the same. Proof of Diligence protocol includes a carefully-designed incentive mechanism that is provably secure when watchtowers are rational actors, under a mild rational independence assumption. 

Our proposed system is now live on Ethereum testnet. We deployed a watchtower network and implemented Proof of Diligence for multiple optimistic rollups. We extract execution as well as inclusion proofs for transactions as a part of the bounty. Each watchtower has minimal additional computational overhead beyond access to standard L1 and L2 RPC nodes. Our watchtower network comprises of 10 different (rationally independent) EigenLayer operators, secured using restaked Ethereum and spread across three different continents, watching two different optimistic rollups for Ethereum, providing them a decentralized and trustfree first line of defense. The watchtower network can be configured to watch the batches committed by sequencer on L1, providing an approximately 3 minute (cryptoeconomically secure) finality since the additional overhead for watching is very low. This is much lower than the finality delay in the current setup where it takes about 45 minutes for state assertions on L1, and hence will not delay the finality process on L1. 




\end{abstract}

\section{Introduction}
\paragraph*{L1 Scalability.} 
Blockchain scalability, via improving throughput, latency and transaction fees, is a crucial component of blockchain adoption~\cite{hafid2020scaling,monte2020scaling}. Improving the core L1 consensus protocols is a key direction to scalability~\cite{chen2016algorand,guo2020dumbo,diem,danezis2022narwhal};  as an example,  Ethereum upgraded consensus from the longest chain protocol in the proof-of-work (PoW) setting to the GHOST protocol~\cite{sompolinsky2015secure} with Casper finality gadget~\cite{casper} in the proof-of-stake (PoS) 
setting recently, improving its throughput potentially by three orders of magnitude and reducing 99.95\% energy demand~\cite{Ethereum2023Merge,Gfinity2022Ethereum2TPS}. Changes to L1  via upgrading the consensus protocol is a wholesale change (hard fork), requiring significant social consensus around the process and is time-consuming (e.g., Ethereum's upgrade to PoS from PoW took seven years~\cite{CoinTelegraph2020ETH2Upgrade}).  
However the scalability challenges have persisted with increased computational demands and data storage outstripping the upgraded capabilities, with the result that resulting gas fees have not seen any significant reduction. More scaling techniques that maintain equivalent decentralization, known as horizontal scaling, are necessary.

\paragraph*{L2 Economy.}  L2 solutions have emerged as a significant breakthrough, with potential to increase transaction processing speed, cut down costs, and enhance the overall capacity of blockchains. 
Rollups work by processing transactions outside the L1 chain and then posting the transaction data and state commitments back to it. 
Established rollup platforms such as  Optimism \cite{optimism} and Arbitrum \cite{arbitrum}  accommodate a vast ecosystem (e.g., Arbitrum L2 supported more transactions than Ethereum L1 itself in February 2023 \cite{CoinDesk2023Arbitrum}). At the same time, new entities \cite{base,opbnb,linea,layern} are entering the scene, introducing new infrastructure and optimizing their functionalities for specific applications. 
These new schemes differ from traditional rollups in their targeted functionalities and optimization techniques.  Moreover, driven by the demand for customizable and accessible layer 2 solutions, the concept of ``Rollups-as-a-Service'' is gaining traction \cite{caldera,conduit,eclipse,altlayer}, allowing a broader range of participants to create and utilize their own rollup strategies. 

\begin{figure*}
    \centering
    \includegraphics[width=\textwidth]{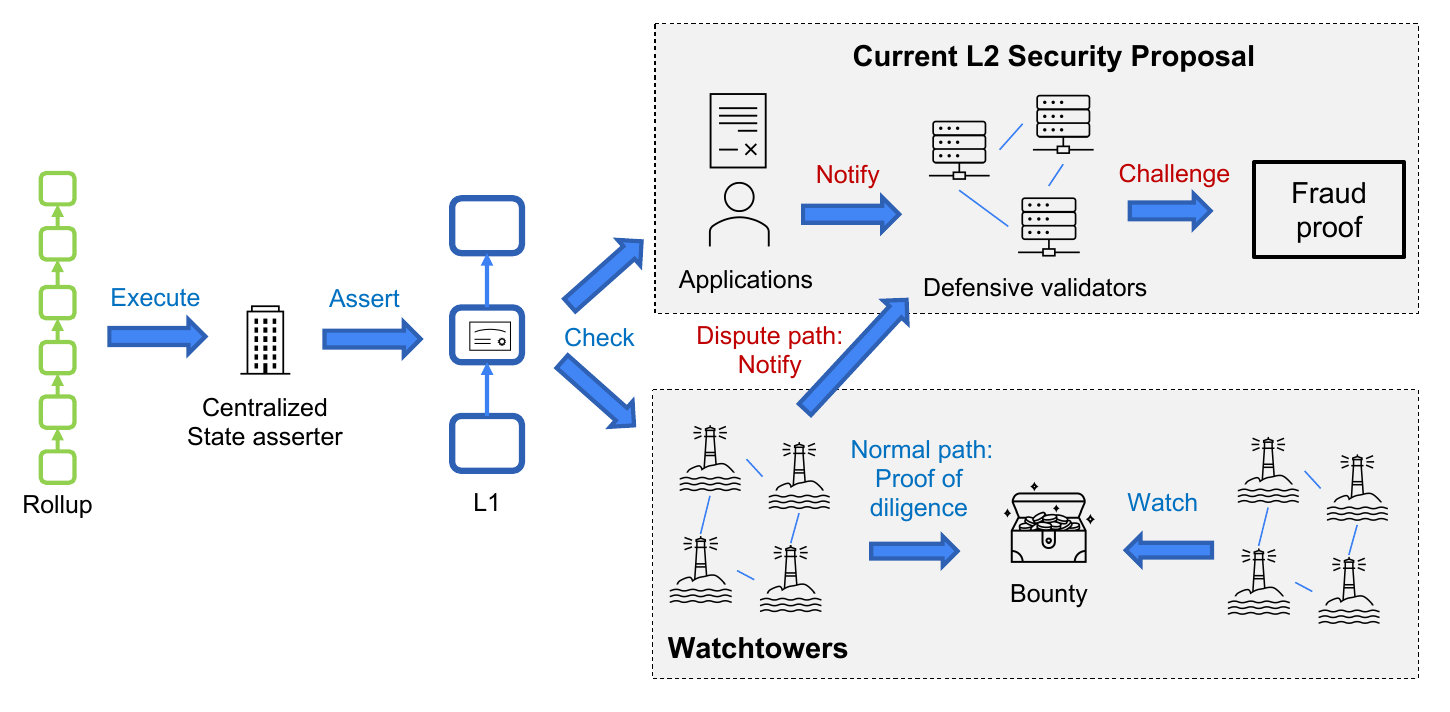}
    \caption{Watchtowers are added to the current L2 security workflow to guard normal path security.}
    \label{fig:system}
\end{figure*}

\paragraph*{Rollup security.} 
With the rapid adoption and reliance on rollups, there is an increasing need to address the critical security concern. Current rollup strategies secure the L2 states by requiring asserters to execute and commit L2 block data to the L1 blockchain (Figure~\ref{fig:system}).  These asserters, often staked and centralized, can be objectively slashed when their asserted states are proven to be incorrect. As depicted in the upper path of Figure~\ref{fig:system}, the system allows applications to independently verify the states committed on L1 and initiate disputes, serving as the secondary line of defense. These disputes are typically addressed and resolved via fraud proofs. 
However, a basic vulnerability remains despite the security layer provided by staked asserters and fraud proofs: the lack of incentives for actively watching the rollup. In the normal path, there is no guarantee that these applications monitor the asserted states consistently and effectively: how to ensure vigilance during normal path when attention might diminish due to the absence of apparent threats? In other words:  {\it who is watching the watchers? } This problem was identified sharply by the inventors of Aribtrum \cite{arbitrumblog}, however a systematic solution has remained open since then. 

\paragraph*{Watchtowers: the first line of defense for rollups.} In this paper we propose a ``rational watchtower pool'', a group of  workers {\em incentivized} to constantly watch the transactions \emph{in the normal path}. The lower part of Figure~\ref{fig:system} illustrates how  watchtowers operate independently, interacting with the existing rollup system only when an incorrect state is identified. At that point, they sound an alarm, much like what applications typically do. However, watchtowers are incentivized to stay vigilant at all times. Their role, serving as the first line of defense for rollups, is crucial for identifying potential faults that might otherwise go unnoticed. To ensure that the watchtower fulfill their ``watching responsibilities'' {\em diligently}, they must provide what we refer to as ``proof of diligence'', a prerequisite for earning incentives.

\paragraph*{Proof of diligence.}  Specifically, the duties of watchtowers entail that for new L2 state assertions to be integrated into the L2 ledger, watchtowers must execute the transactions and validate these new assertions. For a watchtower to demonstrate their diligence, their evidence must meet two criteria: (1) those who don't process the transactions should only be able to generate the proof with a negligible likelihood; (2) a proof produced by one watchtower shouldn't be valid for another. These standards ensure only the diligent watchtowers can generate valid proofs, and prevent multiple watchtowers from presenting identical proofs, thus promoting individual effort. To satisfy these conditions, each watchtower computes a verifiable random function (VRF) using the commitment of transaction execution trace. As every watchtower possesses a unique VRF key pair, they generate proofs independently and submit proofs on-chain, allowing public verification. As illustrated in Figure~\ref{fig:rollup-watchtower}, other watchtowers can recompute this proof, ensuring watchtowers presenting false proofs are identified and penalized during disputes. 

\paragraph*{Incentive framework for watchtowers.} For watchtowers to operate effectively, a carefully designed incentive mechanism is vital. This mechanism should consider two parts: positive incentives for continuous monitoring and negative incentives to punish undesirable actions. Positive rewards are allocated using a ``bounty mining'' scheme. Watchtowers calculate their proof of diligence to determine their eligibility for these rewards. To optimize efficiency, the criteria to obtain the bounty is set by a specific threshold, narrowing down the list of potential recipients. Only the selected winners need to submit their VRF outcomes as their proof of diligence. On the other hand, the negative incentives are implemented through staking. To engage in the protocol, watchtowers must deposit stakes, which are slashable upon detection of misconduct. Monitoring watchtower behavior introduces the problem of ``watching the watchtowers''. Hence, the incentive structure offers additional rewards for the verification of the proof of diligence. Watchtowers who spot inconsistent proofs can challenge them, earning a reward if their challenge is successful. We rigorously analyze the protocol as a non-cooperative game~\cite{maschler2020game}, examining the potential actions of watchtowers: being diligent or lazy. By appropriately configuring these incentives, our findings show that a strategy where all watchtowers diligently monitor rollups is the unique Nash equilibrium,  leading to an effective frontline defense for rollups.

\paragraph*{Extended mechanism design.} Moreover, we extend the action space for watchtowers to encompass potential collusion strategies. Here, several watchtowers might form a collusion to exchange execution results or decide on a mutual random outcome driven by the benefits of saving computational costs. Our findings indicate that when factoring in such cooperative actions, the equilibrium tends to favor collusion, nullifying the role of watchtowers. To counteract this, we introduce design enhancements to disrupt collusive behaviors. We design a whistleblower scheme allowing any colluder to secretly expose collusion in exchange for compensation. Our game-theoretic analysis demonstrates that the whistleblower system encourages betrayals against the collusion. Since this reporting remains confidential, while the act of betrayal is noticeable, the whistleblower can not be individually attributed. Consequently, the collusion will not be initiated at the first place.

\paragraph*{System design.} By integrating the proof of diligence and incentive mechanism, the design enforces watchtowers to maintain their essential roles in guarding the security of rollups in normal path. The network consists of a pool of watchtowers that are allocated to one of the participating rollups randomly, with the allocation changing over time. 
The system utilizes a staking mechanism to ensure sybil resistance and fair allocation of positive incentives; the stake is also used as a bond that can be utilized for negative incentives. On-chain contracts perform the broad activities of watchtower registration, rollup status monitoring, and disbursement of incentives. The off-chain client comprises a wrapper to the L2 full node that fetches intermediate information to mine the bounty. The system is implemented on the Optimism Bedrock stack~\cite{bedrock} that watches Optimism and Base rollups on the Goerli testnet. Note that since watchtowers operate independently of the optimistic rollup framework, the system is capable of monitoring multiple optimistic rollup chains simultaneously. We use Eigenlayer (EL)~\cite{eigenlayer} as the staking mechanism and build our registration functionality associated with it. Our implementation adds a very low compute cost to an L2 full node with a low transaction fee that can be adjusted on a sliding scale.

The subsequent sections provide a detailed breakdown of our investigative approach and findings. In Section \ref{sec:model}, we outline the system's model and explore various threat models. In Section~\ref{sec:background}, we compare prior work on related topics with our solution. Section \ref{sec:protocol} presents an in-depth view of our developed watchtower protocol. Further, in Section \ref{sec:security}, we extend the model to allow collusions and conduct a thorough cooperative game theoretic analysis of the security aspects and the incentive compatibility associated with the enhanced protocol. Section~\ref{sec:system} demonstrates our system design and implementation details, the evaluations results from the live system are presented in Section~\ref{sec:impl}. Concluding the paper, Section \ref{sec:discussion} engages in a comprehensive discussion, bringing to the fore the pivotal learning from our research and suggesting future directions. 
\section{Security Models and Definitions}
\label{sec:model}

\begin{figure*}
\centering
\begin{subfigure}{.45\textwidth}
  \centering
  \includegraphics[width=\linewidth]{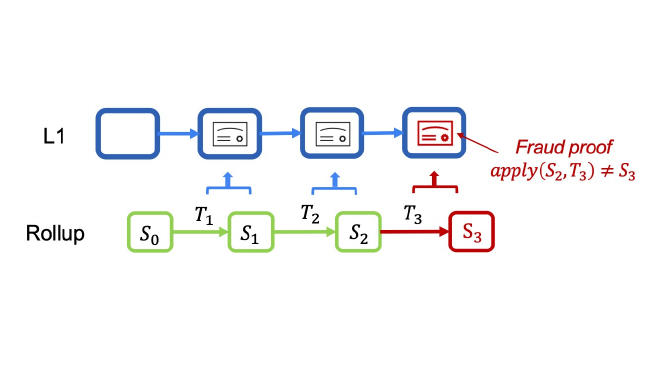}
  \caption{}
  \label{fig:rollup}
\end{subfigure}%
\begin{subfigure}{.45\textwidth}
  \centering
  \includegraphics[width=\linewidth]{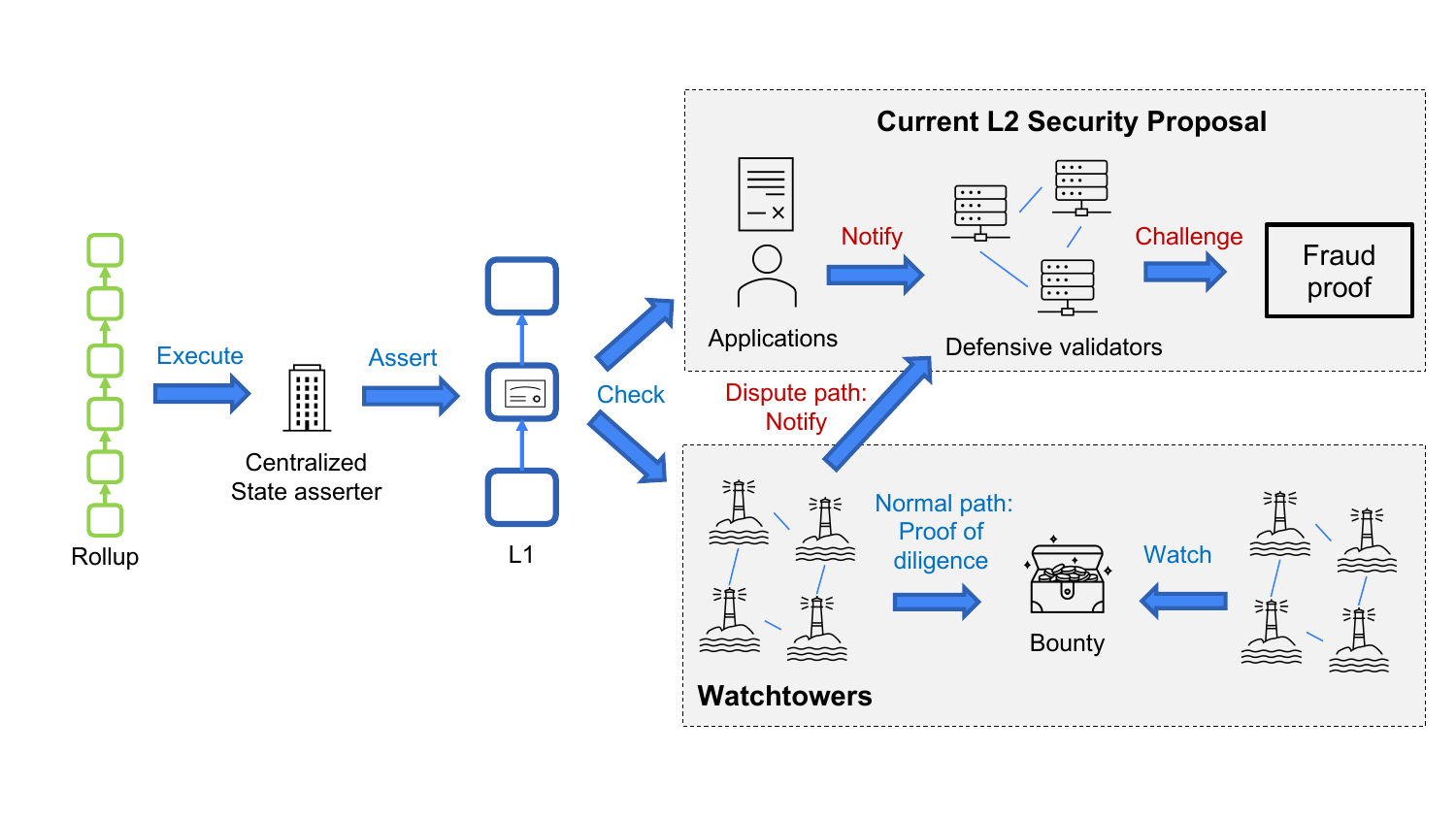}
  \caption{}
  \label{fig:watchtower}
\end{subfigure}
\caption{(a) Optimistic rollup model and (b) Watchtower model.}
\label{fig:rollup-watchtower}
\end{figure*}

\subsection{Rollup Model}
Rollups enhance the blockchain's efficiency by handling transaction execution off the main chain (L1). Present rollup techniques rely on either the validity proofs or fraud proofs to ensure security. Validity proofs, used in schemes known as zk-rollups, apply sophisticated cryptographic techniques to validate every batch of L2 transactions. On the other hand, fraud proofs, employed by optimistic rollups, come into play only when a fault is spotted in the execution process. Our focus is on enhancing the security of optimistic rollups (Figure~\ref{fig:rollup}). In this context, the watchtower pool has been introduced to monitor the normal-path operations -- that is, the process when transactions are presumed to be valid unless challenged. 

In our system, there exists an L2 chain employing optimistic rollup scheme. The scheme involves an untrusted \textit{asserter}, responsible for processing the L2 blocks and submitting the updated states assertion on L1 chain. L1 chain is guaranteed to be secure and live. We assume the data of L2 blocks are stored on L1 directly or through a trusted data availability service, hence the data is ensured to be retrievable. Furthermore, the system consists of a group of \textit{defensive validators}, whose role is activated when there is a need to produce a fraud proof, which is verifiable on chain. We consider a \textit{live} rollup system operator, who coordinates the issuance of L2 blocks (typically through a role called sequencer) and provides rewards for asserter, validators, and watchtowers. The goal of the operator is to ensure the security of rollup states with minimal cost.

\subsection{Watchtower Model}

Independent of the rollup infrastructure, our model incorporates a pool of $n$ registered entities known as \textit{watchtowers} (Figure~\ref{fig:watchtower}), denoted as $\{W_1, W_2, \cdots, W_n\}$. Each watchtower $W_i$ has deposited relative stake $\alpha_i$ into the system, where $\sum_{i}^{n}\alpha_i=1$ and the total amount of stake is $\mathcal{S}$. These watchtowers are modeled as individual \textit{rational adversary}, where the term ``rational'' implies that they operate with the purpose of optimizing a known payoff function. And they have the ability to methodically process all potential scenarios and select strategies that provides the best outcomes. 

Even though rational participants will not deviate from the protocols without cause, they might engage in attacks -- either intentionally or inadvertently -- if the expected rewards justify such actions. We identify several potential attacks in this framework, emphasizing the nuanced strategies rational actors might employ.


\paragraph*{Lazy watchtower} 
Since watchtowers are also required to execute the entire batch of L2 transactions, they can earn rewards by performing their duties diligently. This role closely resembles that of the asserter in the original optimistic rollup system, who posts computation results in exchange for rewards. As a result, one prominent challenge the watchtower design must confront is the ``lazy watchtower'' problem. This issue arises because of two main reasons: (1) rational watchtowers may submit arbitrary responses if the results lack verification process, and (2) they might opt out of protocol participation if the associated costs outweigh potential rewards. In essence, the watchtowers must provide a form of evidence for their work and the protocol must offer sufficient incentives to encourage participants to actively and consistently perform tasks.

\paragraph*{Collusion attack.} Rational entities might form collusion where several parties conduct coordinated actions based on a common agreement. However, it's essential to note that, despite the existence of any agreement, colluding parties maintain individual autonomy and continue to prioritize their self-interest. Within the scope of collusion attacks, we consider adaptive adversaries who can adjust their collusion strategies in response to protocol developments.

In our system, we assume that adversaries are limited by computational constraints, so that they are not able to break the security of necessary cryptographic primitives. It's important to highlight that rational adversaries, guided by their payoff functions, are considered weaker threats compared to Byzantine adversaries. Since their profit-driven actions exclude certain strategies. In the discussion section (Section~\ref{sec:discussion}) we explore the trade-off between different levels of security and associated costs. 


\begin{table}[h!]
\centering
\caption{Summary of notations.}
\label{tab:summary_terms}
\begin{tabular}{ll|ll}
\toprule
\textbf{Term} & \textbf{Definition} & \textbf{Term} & \textbf{Definition} \\
\midrule
$n$ & Number of registered watchtowers & $c_T$ & Cost per transaction batch \\
$W_i$ & Watchtower $i$ & $c_V$ & Cost of resolving disputes \\
$\alpha_i$ & Relative stake of watchtower $W_i$ & $R_B$ & Normal path reward for watchtowers \\
$\mathcal{S}$ & Total amount of stake in the system & $R_C$ & Dispute path reward for watchtowers \\
$S$ & Blockchain state & $R_w$ & Whistleblower protocol reward \\
$T$ & A batch of L2 transactions & $\phi(\alpha_i)$ & Bounty mining threshold \\
$E$ & Execution trace & $t$ & Deposit for participating in collusion \\
$r_S$ & State assertion & $h$ & Rent in diligent collusion \\
$r_E$ & Merkle root of execution trace & $n_c$ & The size of the colluding group \\
\bottomrule
\end{tabular}
\end{table}

\subsection{Preliminaries}
We introduce fundamental primitives utilized throughout the paper. Other notations are summarized in Table~\ref{tab:summary_terms}. 

\paragraph*{Verifiable random functions.} Our protocol use verifiable random functions (VRFs), providing two functions to generate and verify proofs. $\hash(x)$ processes an input $x$ and returns two values $(d, \pi_d)$: a normalized hash digest $d$ and a proof $\pi_d$. The value $d \in [0,1)$ is uniquely determined by the input $x$ and a secret key $sk$, and is indistinguishable from a random value to anyone that does not know $sk$. $\verifyVRF(\pi_d, d, x) \in \{\true, \false\}$ takes the proof $\pi_d$ input and allows anyone who knows the public key $pk$ to verify whether $d$ is the correct value computed from $x$ and $sk$.

\paragraph*{Merkle trees.} Merkle trees are a fundamental data structure in cryptography, summarizing lists of items, such as transactions or states, by concatenating their cryptographic hashes at various levels of the tree. We provide the function \Merklize$(L) \to r$ that generates a Merkle root $r$ from a list of items $L$. It involves the construction of a Merkle tree or Patricia trees \cite{merkle1987digital, wood2014ethereum} for $L$ and then produces the root hash that represents the entire list of items. \MerkleProof$(r, l, L)\to p$ is used to generate a Merkle proof $p$ associated with a leaf node $l$ in a tree rooted at $r$. This proof consists of the minimal amount of information needed to confirm the presence of the specific leaf node $l$ within the tree constructed from $L$. Furthermore, there exists a validation function \verifyMerkleProof$(r, l, p) \in \{\true, \false\}$ that takes in a root $r$, a leaf $l$, and a Merkle proof $p$, then returns a boolean value indicating whether $p$ demonstrates that $l$ is indeed part of the Merkle tree rooted at $r$.

\paragraph*{State transitions.} We consider a general model for L1 blockchain, which keeps the latest states set $S$ and transactions organized as blocks within hash-based chains. The function \apply$(S', T) \to (S, E)$ represents the process of applying a list of transactions $T$ to a prior state $S'$ to yield a new state $S$ and an execution trace $E$. The execution trace is a detailed record of all intermediate states during the transition from $S'$ to $S$, serving as a reference for verification. Another function \validate$(r,r',L)\in \{\false(r), \false(r'),\false(r,r')\}$ resolves a conflict between two Merkle roots $r$ and $r'$ calculated from the same list of items $L$. The output represents the subset of roots that are proven to be invalid. There are different ways to implement this function, such as using L1 as a trusted third party to provide ground truth or executing an interactive verification game (IVG) to identify the incorrect root.

\paragraph*{Game theory.} The concepts about game theory utilized in our analysis are all defined in the book~\cite{maschler2020game}. We first analyze the rollup security with watchtowers as a non-cooperative game in strategic form (Def.  4.2, \cite{maschler2020game}), and examine the dominance of strategies (Def.  3.10 and 4.6, \cite{maschler2020game}) to find the pure strategy Nash equilibrium (Def. 4.17 and Def. 5.3, \cite{maschler2020game}) for watchtowers. We also discuss an enhanced protocol considering cooperative game (Chapter 15, \cite{maschler2020game}), where more than one Nash equilibria exist and Pareto efficiency (Def. 15.7, \cite{maschler2020game}) is considered. 

\section{Background and Prior Work}
\label{sec:background}
The dialogue surrounding optimistic rollups originated in community discussions on the Ethereum research forum \cite{rollupdiscussion}, where developers and researchers shared insights about potential scalability solutions. One of the earliest detailed presentations came from the Optimism team, who outlined the foundational framework of optimistic rollups and their theoretical implications~\cite{optimismrollup}. The theoretical cornerstone for optimistic rollups was further strengthened through scholarly research. Pioneering works such as Arbitrum~\cite{kalodner2018arbitrum} and TrueBit~\cite{truebit} introduced essential concepts related to off-chain computation and dispute resolution.

Subsequent discussions began to address broader implications in contexts such as data availability~\cite{al2018fraud,yu2020coded,nazirkhanova2022information}, sequencer risks~\cite{mccorry2021sok,motepalli2023sok}, and validator behavior~\cite{koch2018predictable,arbitrumblog,nabi2020game,proofofcustody,dong2017betrayal}. Specifically, among these topics, validator behavior is most pertinent to this paper. While the basic rollup design discusses an elementary incentive structure to motivate asserters to post correct states, verifiers may lack incentives to monitor the system's states diligently, such a situation is known as the verifier's dilemma problem~\cite{luu2015demystifying}. To address this issue and enhance system security, the authors of Arbitrum~\cite{arbitrumblog} proposed an attention game in which verifiers who fail to participate may be punished. TrueBit~\cite{koch2018predictable} also suggested an enhanced mechanism to select a pool of validators, requiring them to submit a proof of independent work~\cite{proofofcustody} to enforce verification. Additionally, some research \cite{nabi2020game} provides game-theoretic analyses of collusion risks in incentivized computation outsourcing, proposing mitigation strategies, though not completely resolving the issue. In the field of verifiable outsourced cloud computing, \cite{dong2017betrayal} investigates the dynamics between clients and workers within smart contract frameworks, demonstrating a preference for honest behavior under certain conditions. Although they propose a ``traitor contract'' to eliminate collusion, this is limited to a two-party context. Furthermore, we contend that this proposed solution might not be effective, as the contract's output could compromise the traitor's secrecy. Another study \cite{okada1993possibility} examines cooperation in $N$-person prisoner's dilemma scenarios, with institutional arrangements akin to smart contracts. Differing from our focus, this work considers collusion through bargaining rather than a leader-based approach. A new design of the attention game was proposed in a recent paper~\cite{mamageishvili2023incentive} to find the optimal number of validators that minimizes failure probability. The design only provides probabilistic security and requires modification in the underlying rollup protocol to ensure deterministic security. As a comparison, we provide a plug-and-play solution that can be used for any off-chain compute resource with minor modifications.


\section{The Watchtower Protocol}
\label{sec:protocol}

\subsection{Proof of Diligence}

In our protocol, we focus on a single task that watchtowers undertake: verifying the updated state assertion $r_S$ of the L2 blockchain, a state assertion is the Merkle root of the state tree, calculated by $r_S = \Merklize(S)$. The states $S$ on a blockchain consist of a list of key-value pairs, such as the account address and the account balance. Formally, given the latest validated state $S'$ and a sequence of transactions $T$, the responsibility of the watchtowers is to verify $r_S$ by executing the computations specified in Algorithm~\ref{alg:checkstate}.

\begin{algorithm}[t]
\caption{Function for Watchtowers}
\label{alg:checkstate}
\begin{algorithmic}[1]
\Function{\checkState}{$S',T,r_S,\alpha_i$} 
\State $(S,E) \gets \apply(S',T)$
\State $r'_S = \Merklize(S)$
\State $r_E = \Merklize(E)$ \label{alg:podfirst}
\State $(d, \pi) = \hash(r'_S | r_E)$
\If{$d < \phi(\alpha_i)$}
    \State submit proof of diligence $(d, \pi)$
\EndIf\label{alg:podlast}
\If{$r_S = r'_S$} 
    \State Return \true
\Else
    \State Return \false
\EndIf
\EndFunction
\end{algorithmic}
\end{algorithm}

Algorithm~\ref{alg:checkstate} represents the process of assessing whether the proposed state assertion $r_S$ is indeed consistent with the ledger history. The watchtower calling the $\checkState$ function first applies the transactions $T$ to the initial state $S'$, which returns the new state $S$ and an execution trace $E$. The watchtower then calculates the Merkle root of the legitimate state $S$ and compares the result $r'_S$ with the posted $r_S$. If the verification process succeeds and no faults are found, the watchtower considers the new state $S$ as validated. Otherwise, the watchtower is expected to raise an alarm to the rollup scheme, activating defensive validators for dispute resolution and fraud proof creation.

Besides, the watchtowers utilize the execution trace $E$ to construct a Merkle root $r_E$. Since $r_E$ is not available anywhere and can only be derived by executing the transactions, it serves as evidence of their work and diligence. They compute a VRF using $r_E$ and $r'_S$, incorporating their secret keys. It is assumed that the corresponding public keys were disclosed during the registration phase. The VRF produces $(d, \pi)$; the digest $d$ is subsequently used to allocate rewards for diligent watch. Accompanied by the proof $\pi$, the watchtower submits this as the \textit{proof of diligence}, which satisfies the following properteis:
\begin{itemize}
    \item[-] \textbf{Verifiability.} Given a keypair $(pk, sk)$, for any input $x$, if $(d, \pi) \gets \hash(x)$, then $\verifyVRF(\pi, d, x) = \true$.
    \item[-] \textbf{Uniqueness.} Given a keypair $(pk, sk)$, for any input $x$, if $(d, \pi) \gets \hash(x)$, no one, including the key owner, can produce a different $d'\neq d$ and the associated proof $\pi'$ such that $\verifyVRF(\pi', d', x) = \true$.
    \item[-] \textbf{Pseudorandomness.}  For a given input $x$ and public key $pk$ the output $d$ is indistinguishable from a truly random string to anyone who does not possess the private key $sk$.
\end{itemize}

The verifiability ensures that once the proof is posted on the chain, it can be verified by every watchtower using $r_S, r_E$, and the public key $pk$. This process is referred to as  ``watching the watchtowers''. Other watchtowers will use their own $r_S, r_E$ values to check the proof. If they detect any inconsistencies, they invoke the $\validate$ function to resolve conflicts on the chain. After a predefined challenge period $t_C$, the protocol concludes that the remaining proofs are correct. The uniqueness and pseudorandomness imply that only diligent watchtowers can generate a valid proof, and the proof generated by one watchtower cannot be used by others. 

\subsection{Incentive Design}
To ensure that watchtowers execute their verification duties with diligence, it's imperative to institute a carefully designed incentive mechanism. First of all, this mechanism mandates that all enrolled watchtowers use their stakes as the deposit. This deposit acts as a form of commitment to honest service — any verified misbehavior results in the slashing of their deposit. 

\paragraph*{Bounty mining.} Understanding that watching operations incur costs, represented as $c_T$, in the process of executing all transactions in $T$, we introduce a bounty mining scheme to motivate the watchtowers to perform the verification (specified in Algorithm~\ref{alg:checkstate}, line~\ref{alg:podfirst}-\ref{alg:podlast}). The process of bounty mining can be analogous to the process of committee or leader selection in some blockchains~\cite{chen2016algorand,david2018ouroboros}, where a VRF is computed to determine bounty winners. Formally, a diligent watchtower $W_i$ with relative stake $\alpha_i$ who performs the transaction execution can generate a proof of diligence $(d,\pi)$ on the task. 
Then with a probability $\phi(\alpha_i)$, $W_i$ finds that its proof satisfies a certain condition (specified in Eq.~\ref{eq:cond}), allowing them to receive a bounty by publishing the proof. A system parameter $\theta$ is defined to control the probability that a party with all stake wins the bounty.
\begin{equation}
    \Pr[\vrf{i}(r_S| r_E).d < \phi(\alpha_i)] = \phi(\alpha_i) = 1-(1-\theta)^{\alpha_i}
    \label{eq:cond}
\end{equation}

The amount of bounty each winner who submits a valid proof $(d,\pi)$ can collect is a constant value $R_B$.  If any watchtower identifies an incorrect proof, the $\validate$ function will be invoked to resolve the dispute. In this process, both watchtowers are required to publish their $r_S, r_E$ values. The winner of the challenge will receive a constant reward of $R_C$, and a compensation distributed among all winners for the cost of dispute resolution $c_V$. The rewards setting satisfies the following conditions:
\begin{align}
    \label{eq:cond1} R_B > \frac{c_T}{\phi(\alpha_0)}, \, R_C > c_T
\end{align}
where $\alpha_0$ is the unit stake fraction, hence $\alpha_0\le \min\{\alpha_i\}_{i\in[1,n]}$. To ensure that the slashed deposit is enough to pay for the cost of dispute resolution and rewards, we require that 
\begin{equation}
\label{eq:cond2}
    \alpha_0S \ge c_V + (n-1)R_C
\end{equation}
In summary, the entire protocol works as follows.
\begin{enumerate}
    \item When a new state assertion $r_S$ is published, a bounty timer $t_1$ starts. Each watchtower recomputes the assertion $r'_S$ and generates the execution trace root $r_E$ by applying transactions $T$ to the old states $S'$, then computes $(d,\pi)=\hash(r'_S, r_E)$. 
    \item If the assertion is incorrect ($r_S \neq r'_S$), the watchtower notifies the defensive validators of the rollup to initiate a challenge.
    \item If a watchtower wins the bounty, the watchtower submits proof of diligence $(d,\pi)$ before $t_1$.
    \item When a watchtower observes a proof of diligence submitted by other watchtowers, it verifies the proof using the other's public key and the execution trace root $r_E$ calculated by itself. If the proof is incorrect, the watchtower calls $\validate$ interface to resolve the dispute.
    The cost of triggering $\validate$ is denoted as $c_V$ shared among all watchtowers who call the function.
    \item If no $\validate$ is triggered before $t_1$ expires, the rollup operateor concludes that the asserted execution trace root is correct. Validated bounty winners receive $R_B$ as reward each. 
    \item If $\validate$ is triggered, the winning parties receive a reward $R_C$ and a compensation for the shared cost, and the losing parties lose all the stake. 
\end{enumerate}




\subsection{Incentive Analysis in Non-Cooperative Games}

We first consider the proof of diligence protocol as a non-cooperative game denoted as \podgame, where different watchtowers optimize their individual payoff without any mutual agreement for cooperation. Watchtowers can adopt one of two possible strategies: diligent or lazy. Diligent watchtowers execute transactions honestly and report proof of diligence when the condition is met. In contrast, lazy watchtowers opt for generating a random result as the new state assertion, incurring negligible cost. Moreover, these lazy watchtowers might submit a fake proof, computed from the random root, to deceitfully claim the bounty. Notably, we consider only those lazy {\em and deceitful} watchtowers, as non-deceitful lazy watchtowers are indistinguishable from non-participants in impact. A default constraint of our incentive mechanism is that the payoff for diligent behavior is always positive, which dominates the non-participating strategy. Therefore, we omit this trivial case in the subsequent analysis.

Let $u^a_i(n_d)$ be the expected payoff function of watchtower $W_i$ given that watchtower $W_i$ choose action $a\in\{d,l\}$,  where $d$ and $l$ represent diligent and lazy strategies respectively, and there are $n_d$ diligent watchtowers in total. According to the protocol, for all $i\in[1,n]$, these payoff functions are defined as follows:
\begin{align}
    u^d_i(n_d) &= \begin{cases}
        \phi(\alpha_i)R_B+R_C-c_T & n_d < n\\
        \phi(\alpha_i)R_B-c_T & n_d = n
    \end{cases}\\
    u^l_i(n_d) &= \begin{cases}
        \phi(\alpha_i)R_B & n_d = 0\\
        -\alpha_iS\phi(\alpha_i) & n_d > 0
    \end{cases}
\end{align}
Note that a lazy watchtower would attempt to mimic genuine probability to submit proofs, otherwise the proof submission frequencies that can happen with negligible probability can be used to detect malicious behaviors. By comparing the payoff for different strategies, we observe that the diligent strategy dominates the lazy strategy for all watchtowers; formally, we have the following theorem.
\begin{theorem}
\label{thm:pod-dominant}
    The diligent strategy in the \podgame\ is a dominant strategy for all watchtowers.
\end{theorem}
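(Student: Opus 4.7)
The plan is to establish strict dominance directly from the payoff tables: for every watchtower $W_i$ and every configuration of the remaining $n-1$ players, the diligent payoff strictly exceeds the lazy payoff. Since the payoff functions depend on the others only through the number of diligent opponents, I would parametrize them by $k \in \{0, 1, \ldots, n-1\}$, so that $n_d = k+1$ under action $d$ and $n_d = k$ under action $l$ for $W_i$. The task then reduces to verifying $u^d_i(k+1) > u^l_i(k)$ for every such $k$, independent of which specific opponents play which action.

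First I would handle the extreme case $k = 0$. Here all opponents are lazy, so no diligent party is available to trigger $\validate$; the lazy payoff is $\phi(\alpha_i)R_B$ while the diligent payoff is $\phi(\alpha_i)R_B + R_C - c_T$. The gap is exactly $R_C - c_T$, strictly positive by~\eqref{eq:cond1}. For $k \geq 1$, at least one diligent opponent will catch and challenge a fake proof, so $u^l_i(k) = -\alpha_i S \phi(\alpha_i)$. The diligent payoff is $\phi(\alpha_i)R_B + R_C - c_T$ when $k+1 < n$, or $\phi(\alpha_i)R_B - c_T$ when $k+1 = n$.

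The tighter subcase is $k+1 = n$, where the gap becomes $\phi(\alpha_i)R_B + \alpha_i S \phi(\alpha_i) - c_T$. Using monotonicity of $\phi(\alpha) = 1-(1-\theta)^\alpha$ in $\alpha$ together with $\alpha_i \geq \alpha_0$, I would deduce $\phi(\alpha_i) R_B \geq \phi(\alpha_0) R_B > c_T$ via the bound $R_B > c_T/\phi(\alpha_0)$ from~\eqref{eq:cond1}, making the gap strictly positive. The subcase $k+1 < n$ is strictly easier because it only adds nonnegative terms $R_C$ and $\alpha_i S \phi(\alpha_i)$ to a positive baseline.

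The main obstacle, and the reason the incentive constants in~\eqref{eq:cond1} must be jointly calibrated, is the all-diligent corner case $n_d = n$. There $W_i$ never collects the challenge bonus $R_C$ and must recoup the execution cost $c_T$ purely from the expected bounty $\phi(\alpha_i) R_B$; the condition $R_B > c_T/\phi(\alpha_0)$ is precisely what guarantees this. Once this single inequality is confirmed, the case split yields $u^d_i(k+1) > u^l_i(k)$ for every $k$, so diligent strictly dominates lazy for every $W_i$, which by definition makes diligent a dominant strategy.
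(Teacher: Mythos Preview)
Your proof is correct and follows essentially the same case split as the paper's own argument, parametrizing by the number of diligent opponents and invoking condition~\eqref{eq:cond1} in each case. If anything, you are more careful in the corner case $k+1=n$: you explicitly use the monotonicity of $\phi$ with $\alpha_i\ge\alpha_0$ to deduce $\phi(\alpha_i)R_B>c_T$, whereas the paper's Case~3 simply cites~\eqref{eq:cond1} and in fact drops the $-c_T$ term from $u^d_i(n)$ in its display.
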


\begin{proof}
    For every watchtower $W_i$, given an arbitrary strategy vector $s_{-i}$ containing all others' actions. Let $n_d$ denote the number of watchtowers that are diligent in $s_{-i}$, we compare the payoff of two strategies for $W_i$ below.
    \begin{itemize}
        \item[-] Case 1: If $n_d = 0$, due to Eq.\ref{eq:cond1},  $u^d_i(1) =\phi(\alpha_i)R_B+R_C-c_T >  \phi(\alpha_i)R_B =  u^l_i(0) $.
        \item[-] Case 2: If $0<n_d < n-1$, due to Eq.\ref{eq:cond1},  $u^d_i(n_d+1) =\phi(\alpha_i)R_B+R_C-c_T >  -\alpha_iS\phi(\alpha_i) =  u^l_i(n_d) $.
        \item[-] Case 3: If $n_d = n-1$, due to Eq.\ref{eq:cond1},  $u^d_i(n) =\phi(\alpha_i)R_B >  -\alpha_iS\phi(\alpha_i) =  u^l_i(n-1) $.
    \end{itemize}
\end{proof}
Theorem~\ref{thm:pod-dominant} implies that the game has a unique Nash equilibrium since we can eliminate the strictly dominated lazy strategy and get a unique strategy vector of all diligence; this follows directly from Cor.4.37 \cite{maschler2020game}.

\begin{corollary}
    The unique Nash equilibrium in \podgame occurs when every watchtower is diligent.
\end{corollary}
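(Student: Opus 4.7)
The plan is to derive the corollary as an immediate consequence of Theorem~\ref{thm:pod-dominant} via iterated elimination of strictly dominated strategies. Since each watchtower has only two pure actions and ``diligent'' strictly dominates ``lazy'' by the theorem, a one-line appeal to Cor.~4.37 of~\cite{maschler2020game} will suffice; the only real work is in verifying that the hypotheses of that corollary are met in their strict form.

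First I would restate the conclusion of Theorem~\ref{thm:pod-dominant} in the language of strict dominance: for every watchtower $W_i$ and every opposing profile $s_{-i}$ (parameterised by the number $n_d$ of diligent others), the payoff satisfies $u^d_i > u^l_i$ strictly. Each of the three cases in the proof of the theorem uses either Eq.~\ref{eq:cond1} or the positivity of $\alpha_iS\phi(\alpha_i)$ together with $R_C > c_T$, so these inequalities are genuinely strict. This is the only point where care is needed, since weak dominance would deliver existence but not uniqueness.

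Next I would run a single round of iterated elimination: because the pure strategy set of each player has only two elements and one of them is strictly dominated by the other, the lazy action can be simultaneously removed from every player's strategy set. The reduced game is trivial -- each player has the singleton action set $\{d\}$ -- so the unique strategy profile of the reduced game is the all-diligent profile. By Cor.~4.37 of~\cite{maschler2020game}, IESDS preserves the Nash equilibrium set of the original game, so the all-diligent profile is the unique Nash equilibrium of \podgame. I would close by remarking that because the dominating action is itself a pure strategy, no mixed profile placing positive weight on ``lazy'' can be a best response either, so uniqueness extends across mixed profiles as well. I expect no real obstacle beyond the strictness check already noted.
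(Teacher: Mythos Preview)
Your proposal is correct and mirrors the paper's own argument: the paper derives the corollary in one sentence by eliminating the strictly dominated lazy strategy and invoking Cor.~4.37 of~\cite{maschler2020game}. Your version is simply a more careful unpacking of that same step (the strictness check and the mixed-profile remark are nice touches the paper leaves implicit).
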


In conclusion, in a setting without cooperation, our proof of diligence protocol ensures that rational watchtowers will always work diligently, providing the first line of defense for the rollup system. In practice, this setting can model many situations, such as when watchtowers cannot communicate with each other, or when there is a public reputation system where participating in any cooperation would be detected and deteriorate reputations.
\section{Cooperative Games and The Enhanced Protocol}
\label{sec:security}

In the \podgame, we observe that if all watchtowers choose to be lazy and agree to use a common random $r_E$ and $r_S$ to compute their proofs, they will receive a higher payoff than in the all-diligent equilibrium. However, this strategy did not get chosen since any party can deviate from this ``unreliable collusion'' to achieve higher utility, while lazy parties end up losing all their stakes. 

Collusion can be reinforced by adding specific punishment mechanisms. In the context of rollups, smart contracts are the most viable tools for enforcing agreements or promises of such cooperation. Beyond the previously mentioned lazy collusion strategy, other strategies may improve payoff through cooperation, like sharing the costs of diligence. Additionally, the process of forming a colluding group can vary. For instance, in a leader-based method, a watchtower might take the initiative to create a colluder contract \cite{dong2017betrayal}, setting conditions for joining and outlining actions to be taken, thereby allowing others to join. Conversely, in a leader-less method, watchtowers can choose to join a collusion group and negotiate group strategy collectively \cite{okada1993possibility}.

In this section, we explore the space of collusion, concentrating on two main leader-based strategies applicable to most relevant settings. Our primary findings reveal that establishing mutual agreements enforced by smart contracts makes lazy actions more beneficial. To counteract this, we propose setting up Whistleblower contracts, which encourage colluders to betray their collusion, thereby eliminating the lazy equilibrium. 

\subsection{Lazy Collusion}
\label{sec:lazy-collusion}

One possible strategy that watchtowers might employ to solidify the collusion is to require all colluders to deposit a certain amount of stakes into the collusion. If a colluder posts a proof computed from different execution roots, they will lose the collusion deposit.

We assume any party is capable of initiating such collusion, and we refer to that party as the leader. A leader will specify the amount of stake $t$ that each newly joined colluder needs to contribute. Since the collusion is motivated by the benefit of being lazy, we term this strategy ``lazy collusion.'' We observe that watchtowers choosing not to join the collusion perceive the same game as \podgame, and therefore, they are likely to adopt a diligent strategy. If such independent watchtowers exist, lazy collusion will not gain the expected advantage by not computing the results. Consequently, collusion will only be effective when all watchtowers participate in it. Specifically, the process of forming lazy collusion unfolds as follows:
\begin{enumerate}
    \item A watchtower initiates collusion by placing a deposit of $t$. The watchtower also releases a randomly chosen $r'_E$.
    \item Other watchtowers may join the collusion by placing a deposit of $t$. If $n$ watchtowers join the collusion before $t_{lc}$, the collusion is formed. Otherwise, watchtowers get back their deposits.
    \item During the watching phase, all colluders are required to calculate the proof of diligence using $\hash(r_S, r'_E)$, where $r_S$ is the state root posted by the asserter.
    \item If a colluder becomes a winner, the collusion protocol will check whether the winner's proof is calculated from $r'_E$, if not, the winner is considered a traitor and will lose $t$. 
    \item At the end of the collusion, colluders who do not betray the collusion receive $t + n_tt/(n_c-n_t)$, where $n_c$ is the size of colluding group, $n_t$ is the number of traitors. If all colluders betray, everyone gets back their deposit $t$.
\end{enumerate}

We are considering two possible actions that all colluders (including the leader) can take, given the collusion strategy selected by the leader: obey and betray. Colluders who choose to obey the strategy follow the leader's instructions to submit a response calculated from the specified $r_E$, while those who choose to betray may submit something different, driven by personal interest. The game induced by lazy collusion is denoted as $\lcgame$. Let $u_{l_i}^a(n_o)$ be the expected payoff function of the $i$-th colluder $W_{l_i} (i \in [1,n_c])$. $a \in \{o,b\}$ represents for the action chosen by  $W_{l_i} $, with $o$ as obey and $b$ as betray. There are $n_o$ colluders who choose to obey the collusion strategy. According to the protocol, $n_c = n$, so we let $l_i = i$ for simplicity, and the payoff functions can be written as follows:
\begin{align}
    u^o_i(n_o) &= \begin{cases}
        -\alpha_i\mathcal{S}\phi(\alpha_i) + \frac{(n-n_o)t}{n_o} & n_o < n\\
        \phi(\alpha_i)R_B & n_o = n
    \end{cases}\\
    u^b_i(n_o) &= \begin{cases}
        \phi(\alpha_i)R_B - c_T & n_o = 0\\
        \phi(\alpha_i)R_B + R_C - c_T -t & n_o > 0
    \end{cases}
\end{align}

Now when there exists such lazy collusion, we compare the payoff of colluders with different actions and observe that obeying the group strategy dominates the betrayal for all colluders when the deposit $t$ is high enough, formally, we have the following theorem.
\begin{theorem}
\label{thm:lc-dominant}
    The ``obey'' strategy in the \lcgame\ is a dominate strategy for colluder $W_i$ if the following conditions hold:
    \begin{align}
     \label{eq:t1}   t &> R_C-c_T\\
      \label{eq:t2}  t &> \frac{n-1}{n}\left(\alpha_i\mathcal{S}\phi(\alpha_i) + \phi(\alpha_i)R_B + R_C-c_T\right)
    \end{align}
\end{theorem}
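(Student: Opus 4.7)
The plan is to establish dominance of the \emph{obey} action for $W_i$ by checking every possible strategy profile $s_{-i}$ of the other $n-1$ colluders. Since the payoff of $W_i$ depends on $s_{-i}$ only through $n'_o$, the number of \emph{other} colluders who obey, it suffices to compare $u_i^o(n'_o+1)$ with $u_i^b(n'_o)$ for every $n'_o \in \{0,1,\dots,n-1\}$, and check that the hypotheses \eqref{eq:t1}--\eqref{eq:t2} are sufficient for $u_i^o(n'_o+1) > u_i^b(n'_o)$ in all these regimes.

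I would split the analysis into three cases according to the boundary structure of the piecewise payoffs. \textbf{Case A} ($n'_o = n-1$): all other colluders obey, so obeying yields $\phi(\alpha_i) R_B$ while betraying yields $\phi(\alpha_i) R_B + R_C - c_T - t$; rearranging reproduces exactly condition \eqref{eq:t1}. \textbf{Case B} ($n'_o = 0$): all other colluders betray, so obeying yields $-\alpha_i\mathcal{S}\phi(\alpha_i) + (n-1)t$ while betraying yields $\phi(\alpha_i)R_B - c_T$; rearranging gives $t > \tfrac{1}{n-1}(\alpha_i \mathcal{S} \phi(\alpha_i) + \phi(\alpha_i)R_B - c_T)$, which is implied by \eqref{eq:t2} (and typically by a wide margin). \textbf{Case C} ($1 \le n'_o \le n-2$): obeying yields $-\alpha_i\mathcal{S}\phi(\alpha_i) + (n-n'_o-1)t/(n'_o+1)$ and betraying yields $\phi(\alpha_i)R_B + R_C - c_T - t$, so dominance requires $t \cdot \bigl(1 + (n-n'_o-1)/(n'_o+1)\bigr) > \alpha_i\mathcal{S}\phi(\alpha_i) + \phi(\alpha_i) R_B + R_C - c_T$, i.e.\ $t > \tfrac{n'_o+1}{n} \bigl(\alpha_i\mathcal{S}\phi(\alpha_i) + \phi(\alpha_i)R_B + R_C - c_T\bigr)$.

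The main (and only real) obstacle is identifying which $n'_o$ in Case C is tightest: the factor $(n'_o+1)/n$ is increasing in $n'_o$, so the binding value is $n'_o = n-2$, producing the coefficient $(n-1)/n$ on the right-hand side. This is precisely the bound in \eqref{eq:t2}, which therefore suffices throughout Case C (and, as noted, also subsumes Case B). Together with \eqref{eq:t1} handling Case A, the two hypotheses cover every $n'_o$. Since the inequalities are strict, obeying strictly dominates betraying against every $s_{-i}$, which is the definition of a strictly dominant strategy. I would close by stating explicitly that Cases A--C are exhaustive and that the obtained bounds match the theorem's two conditions verbatim.
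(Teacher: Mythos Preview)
Your proposal is correct and follows essentially the same three-case decomposition as the paper's proof (your Cases A, C, B correspond to the paper's Cases 1, 2, 3). You actually add a detail the paper omits, namely the monotonicity argument identifying $n'_o = n-2$ as the binding case in your Case C that produces the $(n-1)/n$ coefficient in \eqref{eq:t2}.
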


\begin{proof}
    For every colluder $W_i$, given an arbitrary strategy vector $s_{-i}$ containing all others' actions. Let $n_o$ denote the number of colluders that obey the collusion in $s_{-i}$, we compare the payoff of two strategies for $W_i$ below.
    \begin{itemize}
        \item[-] Case 1: If $n_o = n-1$, due to Eq.~\ref{eq:t1},  $u^o_i(n) =\phi(\alpha_i)R_B >  \phi(\alpha_i)R_B + R_C-c_T-t =  u^b_i(1) $.
        \item[-] Case 2: If $0<n_o < n-1$, due to Eq.~\ref{eq:t2} ,  $u^o_i(n_o+1) =-\alpha_i\mathcal{S}\phi(\alpha_i) + \frac{(n-n_o-1)t}{n_o+1} > \phi(\alpha_i)R_B + R_C-c_T-t  =  u^b_i(n_o) $.
        \item[-] Case 3: If $n_o = 0$, due to Eq.~\ref{eq:t2},  $u^o_i(1) =-\alpha_i\mathcal{S}\phi(\alpha_i) + (n-1)t > \phi(\alpha_i)R_B-c_T =  u^b_i(n) $.
    \end{itemize}
\end{proof}

Then we consider the game combining \podgame\ and \lcgame. It starts with an initiation phase where watchtowers may choose to initiate a contract to become the collusion leader. If all watchtowers join the same collusion contract, which is the only case where collusion will be formed successfully, and the conditions specified in Eq.\ref{eq:t1} and \ref{eq:t2} are satisfied, this sub-game \lcgame\ can be eliminated according to Theorem~\ref{thm:lc-dominant}, with the payoff induced by the dominant strategy. Otherwise, independent watchtowers will follow the \podgame\ and iterative elimination can be applied with Theorem~\ref{thm:pod-dominant}. Observing that the payoff derived from \lcgame\ is higher than \podgame,  we find the game has two Nash equilibria but only the lazy collusion strategy is Pareto efficient.

\begin{corollary}
    In \podgame\ that allows lazy collusion, there are two Nash equilibria: (1) all watchtowers are independently diligent (2) all watchtowers are collusively lazy. The second equilibrium is Pareto efficient.
\end{corollary}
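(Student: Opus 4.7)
The plan is to treat the extended protocol as a two-stage game: a collusion-formation stage (where each watchtower may initiate a leader contract, and may join an existing one) followed by the action stage. Because the lazy collusion only activates when all $n$ watchtowers join, the formation stage partitions the outcomes into two cases — collusion fully formed, or collusion not formed (deposits refunded) — and I would identify the two equilibria with these two cases. I would then use backward induction on each case, relying on the two dominance theorems already proved.

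For the first equilibrium (all-diligent, no collusion): conditioned on no collusion contract being accepted by all $n$ players, every watchtower faces precisely the \podgame, so Theorem~\ref{thm:pod-dominant} says diligent is strictly dominant. A unilateral deviation in the formation stage — initiating or joining a singleton/partial contract — cannot flip the game into \lcgame\ (since formation needs all $n$), and at worst returns the deposit, so it is not profitable. For the second equilibrium (all-lazy collusion): conditional on all other $n-1$ players committing to join and obey, Theorem~\ref{thm:lc-dominant} shows that obey is a dominant action once the conditions~(\ref{eq:t1})--(\ref{eq:t2}) on the deposit $t$ hold, so no action-stage deviation pays. I then have to check the formation-stage deviation: if $W_i$ refuses to join, collusion fails and the game reverts to \podgame, giving $W_i$ the diligent payoff $\phi(\alpha_i)R_B - c_T$; joining and obeying gives $\phi(\alpha_i)R_B$, which is strictly larger, so refusing is not profitable either.

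For Pareto efficiency of the collusive equilibrium, I would directly compare the per-player payoff profiles: at (2) every $W_i$ collects $\phi(\alpha_i)R_B$, whereas at (1) every $W_i$ collects only $\phi(\alpha_i)R_B - c_T$, so (2) strictly Pareto-dominates (1). To upgrade this to full Pareto efficiency, I would argue that no feasible strategy profile can raise any $W_i$'s payoff above $\phi(\alpha_i)R_B$ without lowering another player's payoff, because the only way to earn more than the bounty $\phi(\alpha_i)R_B$ is through a successful dispute reward $R_C$, which by definition requires another watchtower to be proven wrong and slashed — hence any Pareto-improving deviation from (2) would have to avoid all disputes while still increasing some payoff, which is impossible given that the bounty $\phi(\alpha_i)R_B$ is already the maximum dispute-free reward.

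The main obstacle will be ruling out spurious intermediate equilibria in the formation stage — in particular, configurations where a strict subset of watchtowers tries to collude while others stay independent. The key observation that defuses this is the all-or-nothing activation rule described in Section~\ref{sec:lazy-collusion}: any partial join simply expires and refunds deposits, which collapses every such profile into the \podgame\ analysis and leaves only the two stated equilibria. A secondary subtlety worth flagging is that the "all-diligent" equilibrium remains a Nash equilibrium only because no single defector can form a collusion alone; if one also allowed mixed-size collusion contracts to pay out, additional equilibria could emerge, and a short remark to this effect would make the corollary's scope precise.
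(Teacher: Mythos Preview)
Your proposal is correct and follows essentially the same route as the paper: a two-stage (formation then action) decomposition, backward induction that invokes Theorem~\ref{thm:pod-dominant} when collusion fails to form and Theorem~\ref{thm:lc-dominant} when it does, and a direct payoff comparison $\phi(\alpha_i)R_B$ versus $\phi(\alpha_i)R_B - c_T$ for the Pareto claim. Your treatment is in fact more thorough than the paper's own one-paragraph justification---you explicitly verify the formation-stage deviations for both equilibria and attempt a full Pareto-efficiency argument rather than merely noting that (2) dominates (1)---but the underlying structure is the same.
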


\subsection{Diligent Collusion}
Moreover, watchtowers might choose to remain diligent while seeking to form a collusion to share the execution costs. In this scenario, the leader initiating the collusion carries out the computations and commits its solution to the group. Anyone wishing to join the collusion is required to contribute a fee of $h < c_T$ to access the results. Subsequently, all colluders utilize the execution root, as calculated by the collusion leader, to generate proof of diligence and claim bounties. We refer to this strategy as ``diligent collusion''. The process for forming such a diligent collusion unfolds as follows:

\begin{enumerate}
    \item A watchtower initiates collusion by placing a deposit of $t$. The watchtower also commits a computed $r'_E$ and specifies a rent $h < c_T$.
    \item Other watchtowers may join the collusion by paying the rent of $h$. Then the committed $r_E$ is revealed. 
    \item During the watching phase, all colluders are required to calculate the proof of diligence using $\hash(r_S, r'_E)$.
    \item If the proof a non-leader colluder submits is recognized as a faulty proof, the leader will lose $t$, and others will get back $ t / (n_c-1)$, where $n_c$ is the size of the colluding group. 
    \item If the proof provided by leader gets accepted, the leader gets $t + (n_c-1)h$.
\end{enumerate}

\begin{table}[]
\centering
\caption{\dcgame: The game induced by diligent collusion.}
\label{tab:dc-game}
\begin{tabular}{|cc|cc|}
\hline
\multicolumn{2}{|c|}{\multirow{2}{*}{Payoff \payoff{u_{d_1}\\u_{d_i}}}} & \multicolumn{2}{c|}{follower}                \\ \cline{3-4} 
\multicolumn{2}{|c|}{}                                                    & \multicolumn{1}{c|}{All join} & Not all join \\ \hline
\multicolumn{1}{|c|}{\multirow{3}{*}{leader}} &
  Obey &
  \multicolumn{2}{c|}{\payoff{\phi(\alpha_{d_1})R_B-c_T+(n_C-1)h\\ \phi(\alpha_{d_i})R_B-h}} \\ \cline{2-4} 
\multicolumn{1}{|c|}{} &
  Betray &
  \multicolumn{2}{c|}{\payoff{\phi(\alpha_{d_1})R_B-c_T+R_C-\frac{c_V}{n-n_C+1}-t\\-\alpha_{d_i}\mathcal{S}\phi(\alpha_{d_i})+\frac{t}{n_C-1}}} \\ \cline{2-4} 
\multicolumn{1}{|c|}{} &
  Cheat &
  \multicolumn{1}{c|}{\payoff{\phi(\alpha_{d_1})R_B+(n-1)h\\ \phi(\alpha_{d_i})R_B-h}} &
  \payoff{\phi(\alpha_{d_1})R_B+(n-1)h\\ \phi(\alpha_{d_i})R_B-h} \\ \hline
\end{tabular}
\end{table}

In the game of diligent collusion, we observe that the leader has three possible actions: obey, betray, and cheat. ``Obey'' implies that the leader will diligently compute the transaction execution root and share it with others. ``Betray'' suggests that the leader might commit a random output to the colluding group while submitting a correct proof for its own benefit. And ``cheat'' represents the scenario in which the leader lazily commits and submits the same random output. The other colluders may only choose to follow what the leader commits, since they pay the rent; in other words, there is no benefit to join the collusion if they plan to choose another action. Table~\ref{tab:dc-game} lists the payoff functions of $\dcgame$, the game induced by the diligent collusion strategy. $u_{d_1}$ and $u_{d_i} (i\in[2,n_c])$ represent for the payoff function for the leader and other colluders. Note that if the leader chooses to cheat, its payoff is highly influenced by whether all watchtowers join the collusion. If there exist independent watchtowers, they must choose the diligent strategy as the $\podgame$ implies, then all colluders will be punished by the proof of diligence protocol. Therefore, it is evident that the following theorem holds:

\begin{theorem}
\label{thm:dc}
$\dcgame$ has no pure strategy Nash equilibrium when $ t> R_C - c_V/(n-1)-h$.
\end{theorem}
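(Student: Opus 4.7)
The plan is to check all pure-strategy profiles of $\dcgame$ and show that each admits a strictly profitable unilateral deviation by either the leader or a follower, so no profile can be a pure Nash equilibrium. The leader's action set is $\{\text{Obey}, \text{Betray}, \text{Cheat}\}$ and the followers collectively either all join ($n_c = n$) or not all join ($n_c < n$), giving six profiles in total.

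My first step would be to use the hypothesis to eliminate Betray from the leader's best-response set regardless of follower behavior. Comparing Obey and Betray in Table~\ref{tab:dc-game}, Obey strictly beats Betray for the leader exactly when
\begin{equation*}
t > R_C - \frac{c_V}{n - n_c + 1} - (n_c - 1) h.
\end{equation*}
The right-hand side is monotonically decreasing in $n_c$ on $\{2, 3, \ldots, n\}$, so the tightest constraint is at $n_c = 2$ and coincides with the hypothesis $t > R_C - c_V/(n-1) - h$. Hence Betray is strictly dominated by Obey for the leader whenever the collusion actually forms, and can be removed from further consideration.

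It then remains to rule out the four profiles in the reduced sub-game over $\{\text{Obey}, \text{Cheat}\} \times \{\text{All join}, \text{Not all join}\}$. At (Obey, All join) the leader gains exactly $c_T$ by switching to Cheat, since all colluders would unanimously submit the same fake proof and no one detects it. At (Cheat, All join), a single follower who deviates to Not join plays the residual $\podgame$ diligently, detects the unanimous fake proofs of the remaining colluders, wins a $\validate$ challenge, and collects $\phi(\alpha_{d_i}) R_B - c_T + R_C$, which beats the staying payoff $\phi(\alpha_{d_i}) R_B - h$ because $R_C > c_T$ by Eq.~\ref{eq:cond1}. At (Obey, Not all join), any non-joining follower prefers joining, saving $c_T - h > 0$. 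Finally, at (Cheat, Not all join) the diligent non-joiners expose the leader's fake proof and slash its stake, so the leader strictly prefers Obey, which yields the positive payoff $\phi(\alpha_{d_1}) R_B - c_T + (n_c - 1) h$. This closes the cycle of strict improvements and completes the argument.

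The delicate part of the write-up will be the payoff bookkeeping under the ``Not all join'' column: one must track who holds correct versus fake execution roots, who triggers $\validate$, and how the dispute cost $c_V$ together with the slashed collusion deposit $t$ are distributed among the winners once the colluding group interacts with the residual $\podgame$ played by the non-joining watchtowers. Once this accounting is fixed, each of the four deviation inequalities in the reduced game is direct, and the theorem's hypothesis is used only in the initial step that eliminates Betray.
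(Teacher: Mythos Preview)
Your proposal is correct and follows essentially the same approach as the paper: first use the hypothesis to show that Betray is strictly dominated by Obey for the leader (your monotonicity argument over $n_c$ makes explicit what the paper only asserts), then exhibit a cycle of strictly profitable deviations among the four remaining profiles $\{\text{Obey},\text{Cheat}\}\times\{\text{All join},\text{Not all join}\}$. Your cautionary remark about the ``Not all join'' bookkeeping is well placed, since the paper's Table~\ref{tab:dc-game} does not actually display the slashed payoffs there and the argument relies on the narrative rather than the table entries.
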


\begin{proof}
    We denote the strategy profile in $\dcgame$ as $\{a, n_c\}$, where $a\in\{o,b,c\}$ is the action chosen by leader, representing obey, betray and cheat, $n_c$ is the number of other watchtowers who choose to join the collusion. Firstly, the condition $t> R_C - c_V/(n-1)-h$ implies that $\forall n_c\in[2,n], u_{d_1}(\{o, n_c\}) > u_{d_1}(\{b, n_c\})$, hence betray is strictly dominated by obey. We then observe that $u_{d_1}(\{o, n\}) < u_{d_1}(\{c, n\})$, indicating that $\{o, n\}$ is not a Nash equilibrium, as in this case the leader can achieve a higher payoff by switching to cheat. Additionally, independent watchtowers receive a higher payoff if they join the collusion when the leader chooses to obey, as this spreads the execution cost across the entire colluding group. Conversely, when the leader opts to cheat, if all watchtowers join the collusion, switching to be independently diligent achieves a better payoff. However, if not all watchtowers join, obey becomes the more beneficial strategy for the leader. Consequently, there doesn't exist any pure strategy that is a Nash equilibrium.
\end{proof}

Theorem~\ref{thm:dc} indicates that even if we take \dcgame\ into consideration, the pure strategy Nash equilibria of the full game remain the same. Hence we have the following property.

\begin{corollary}
    In \podgame\ that allows lazy and diligent collusion, there are two Nash equilibria: (1) all watchtowers are independently diligent (2) all watchtowers are collusively lazy. The second equilibrium is Pareto efficient.
\end{corollary}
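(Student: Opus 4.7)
The plan is to combine the previously established results on the three sub-games, showing that adding the diligent collusion option to the lazy-collusion setting neither destroys the two Nash equilibria identified in the prior corollary nor creates any new ones. Concretely, the two equilibria to verify are (a) all watchtowers independently diligent and (b) all watchtowers joining a single lazy-collusion contract and obeying it.

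First, I would verify that each profile remains a Nash equilibrium in the enlarged strategy space. For (a), any unilateral deviation is confined to one of three possibilities: switch to lazy play inside the \podgame\ (ruled out by Theorem~\ref{thm:pod-dominant}); attempt to initiate a lazy-collusion contract (which fails to form since all others stay independent and refuse to deposit, returning the would-be leader's stake); or attempt to initiate a diligent-collusion contract (which fails since no rational independent watchtower has reason to pay the rent $h$ when it must execute anyway). For (b), Theorem~\ref{thm:lc-dominant} rules out unilateral betrayal once the contract has formed; refusing to join at the formation stage places the refuser into a sub-game where the collusion fails to form and, by Theorem~\ref{thm:pod-dominant}, the refuser is left with payoff $\phi(\alpha_i)R_B - c_T$, strictly less than the collusive-lazy payoff $\phi(\alpha_i)R_B$.

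Second, I would apply Theorem~\ref{thm:dc} to exclude any pure Nash equilibrium in which a diligent collusion successfully forms. The theorem's case analysis shows that under the stated condition $t > R_C - c_V/(n-1) - h$, the leader cycles between obey and cheat depending on whether all other watchtowers join, so at least one player has a profitable unilateral deviation at every candidate profile. I would extend this enumeration to the hybrid situations where a diligent collusion coexists with some independent watchtowers or with a competing lazy-collusion attempt; in each hybrid the leader (or a follower) can improve payoff by switching sides, so no pure profile stabilizes. Pareto efficiency of (b) then follows by direct comparison of the two NE payoffs: since (b) yields $\phi(\alpha_i)R_B$ while (a) yields $\phi(\alpha_i)R_B - c_T$ for every $i$, profile (b) Pareto-dominates (a), and no other feasible outcome can improve upon (b) without reintroducing some execution cost $c_T$ for at least one player.

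The main obstacle I anticipate is the exhaustive bookkeeping in the second step: precisely enumerating the hybrid configurations (partial diligent collusion plus independents, simultaneous competing collusion contracts, etc.) and arguing in each case that the implicit extensive-form game produced by the contract-initiation stage admits a profitable deviation. The underlying economic logic is straightforward — diligent collusion is never self-enforcing because the leader can always pocket the rents $(n-1)h$ by cheating once every follower has paid — but stating this cleanly within the non-cooperative framework of Section~\ref{sec:model} requires care.
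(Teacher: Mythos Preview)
Your approach is essentially the same as the paper's. The paper gives no formal proof of this corollary: it simply notes in one sentence that ``Theorem~\ref{thm:dc} indicates that even if we take \dcgame\ into consideration, the pure strategy Nash equilibria of the full game remain the same,'' and then restates the earlier corollary (the one following Theorem~\ref{thm:lc-dominant}). Your plan---verify the two profiles survive as equilibria in the enlarged strategy space, invoke Theorem~\ref{thm:dc} to rule out any pure NE involving diligent collusion, and compare payoffs for Pareto efficiency---is exactly this logic spelled out with more care, including the hybrid-configuration bookkeeping that the paper leaves implicit.
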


\subsection{Enhanced Protocol with Whistleblower}
The incentive for colluders to establish collusion lies in the potential to mine bounties with less effort. However, less effort always correlates with the likelihood of faulty proofs. As analyzed in Section~\ref{sec:lazy-collusion}, watchtowers are incentiviced to join the lazy collusion. In lazy collusion, the benefits derived from betraying the collusion (by being diligent in computations) do not suffice to offset the loss of the collusion deposit. Thus, the principles for eliminating the collusion are to: (1) offer rewards for identifying and reporting collusion, and (2) provide compensation for the losses incurred from betraying the collusion. We refer to such colluders, who disclose information about collusion to the rollup system, as ``whistleblowers''. In response, we have specifically designed the whistleblower protocol as follows:
\begin{enumerate}
    \item The rollup operator establishes a whistleblower bounty $R_w$ and declares that the first whistleblower will be eligible for the reward.
    \item Any individual can place a deposit of $d$ and submit the correct $r_E$ to assume the role of a whistleblower.
    \item The protocol invokes the $\validate$ interface to resolve the dispute. If the whistleblower succeeds, they receive $R_w + d$ in return. Otherwise, a loss results in the forfeiture of their deposit.
\end{enumerate}

To ensure the payoff of whistleblower is better in all above collusion games, we derive the following condition to determine rewards.

\begin{lemma}
    With the additional action ``report'' that each watchtower can choose, the strategy that all watchtowers obey the lazy collusion is no longer a Nash equilibrium for watchtower $i$ if
    \begin{align}
    \label{eq:whistleblower}
    R_w &> \phi(\alpha_i)R_B + c_V + \alpha_i\mathcal{S}\phi(\alpha_i) + c_T
\end{align}
\end{lemma}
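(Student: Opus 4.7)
The plan is to exhibit a profitable unilateral deviation for watchtower $W_i$ from the strategy profile in which all $n$ colluders obey the lazy collusion; since any one profitable deviation suffices to rule out Nash equilibrium, I would focus on comparing just two actions for $W_i$: \emph{obey} (the baseline strategy already analysed in \lcgame) and \emph{report}, meaning $W_i$ actually executes $T$ to obtain the correct root $r_E$ and invokes the whistleblower protocol to publish it.

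First I would record the baseline payoff, which is immediate from the payoff table of \lcgame: when $n_o = n$, every colluder submits a proof derived from the same fake root $r'_E$, no proof is challenged, and each winner simply collects $R_B$, so $u^o_i(n) = \phi(\alpha_i) R_B$.

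Next I would enumerate the four contributions to the payoff of the ``report'' branch and sum them: (i) the execution cost $-c_T$ incurred to recover the true $r_E$; (ii) the cost $-c_V$ of invoking \validate; (iii) the whistleblower reward $+R_w$ that the protocol guarantees to the first successful reporter; and (iv) the fate of $W_i$'s already-submitted lazy proof, which was computed from $r'_E \neq r_E$: with probability $\phi(\alpha_i)$ that proof was selected as a bounty winner, and once the whistleblower publishes $r_E$ the protocol recognises it as faulty and slashes $\alpha_i\mathcal{S}$ in place of awarding $R_B$. Summing these yields
\begin{equation*}
u^{\mathrm{report}}_i \;=\; R_w - c_T - c_V - \phi(\alpha_i)\alpha_i\mathcal{S},
\end{equation*}
so reporting strictly dominates obeying precisely when $R_w - c_T - c_V - \phi(\alpha_i)\alpha_i\mathcal{S} > \phi(\alpha_i)R_B$, which rearranges to the condition stated in Eq.~\ref{eq:whistleblower}.

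The main conceptual obstacle, and the step I would argue most carefully, is that the collusion contract cannot impose an additional penalty on $W_i$ for exposing the group. This rests on the design assumption that the whistleblower channel is confidential: the \emph{act} of exposure is public but the \emph{identity} of the whistleblower is not attributable, so although the collusion contract detects that exposure occurred, it cannot individually slash the reporter's collusion deposit $t$. Granting that, only the four ingredients above contribute to the payoff gap and the claim follows by direct arithmetic; no further game-tree reasoning is needed, because violating the best-response condition for the single player $W_i$ is already enough to break the equilibrium.
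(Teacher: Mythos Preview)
Your approach is essentially the same as the paper's: you compute $u_i^o(n)=\phi(\alpha_i)R_B$, build the reporter's payoff from the four ingredients (execution cost, $\validate$ cost, whistleblower reward, expected slashing of the fake proof), and rely on the confidentiality of the whistleblower channel to argue that the collusion deposit $t$ cannot be forfeited. The paper makes exactly the same deviation argument and the same non-attribution point.

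One small accounting difference: the paper's reporter payoff also includes the dispute-path reward $+R_C$, giving
\[
u_i^r(n) = -\alpha_i\mathcal{S}\phi(\alpha_i) - c_V + R_C - c_T + R_w,
\]
which it then compares to the intermediate threshold $\phi(\alpha_i)R_B + R_C$ before dropping down to $u_i^o(n)=\phi(\alpha_i)R_B$. Since the $R_C$ cancels in that comparison, both your computation and the paper's yield the identical condition \eqref{eq:whistleblower}; your version is simply the conservative estimate that omits $R_C$ on both sides. Either way the arithmetic and the conclusion are the same.
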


\begin{proof}
We first consider the impact that the additional action \textit{report} brings to \lcgame. First, the whistleblower, by adhering to the collusion agreement to submit proof, will not be subject to punishment by the slashing rule of the collusion. 
However, the payoffs of other colluders will be reduced due to the exposure of the collusion, hence the changes in outcomes are detectable and might be used to augment the collusion deposit. Even though, the act of reporting cannot be traced back to an individual colluder.  Therefore, any colluder may switch to report to gain higher payoff from the whistleblower protocol, since
\begin{align}
    u_i^r(n) &= -\alpha_i\mathcal{S}\phi(\alpha_i) - c_V + R_C - c_T + R_w \\
    &> \phi(\alpha_i)R_B + R_C >  \phi(\alpha_i)R_B = u_i^o(n)
\end{align}
This further discourages other colluders from participating in the collusion, because the payoff $\hat{u}_i^o(n)$ they get in the existence of whistleblower becomes
\begin{equation}
    \hat{u}_i^o(n) = -\alpha_i\mathcal{S}\phi(\alpha_i)  <  \phi(\alpha_i)R_B - c_T - c_V/2 + R_C = u_i^d(n)
\end{equation}
 As a result, joining and obeying the lazy collusion is not a Nash equilibrium.

 Then we discuss the impact of the whistleblower scheme and the elimination of the lazy equilibrium on \dcgame. We denote the strategy profile in $\dcgame$ as $\{a, n_c, w\}$, where $a\in\{o,b,c\}$ is the action chosen by the leader, $n_c$ is the number of watchtowers in the collusion, and $w\in \{\true,\false\}$ represents whether there exists a whistleblower. First, all strategies with $w = \false$ are not Nash equilibria by the same analysis in Theorem~\ref{thm:dc}. Next, since the existence of a whistleblower will not lower the payoff of a leader who chooses to obey, betrayal is still strictly dominated by obedience. Then we consider the strategy where a whistleblower exists in the collusion group. In this case, if the leader opts to cheat, it's always better to switch to obedience. However, if the leader chooses to obey, the whistleblower is better off choosing not to report. Consequently, there doesn't exist any pure strategy in the subgame \dcgame that is a Nash equilibrium.
\end{proof}

The introduction of a whistleblower protocol changes the payoff dynamics of lazy collusion, as any colluder can expose the collusion for a higher payoff. Knowing this, the leader may not choose to initiate lazy collusion in the first place and the full game reach back the state where diligent strategy is the only Nash equilibrium. 

\begin{corollary}
    In \podgame\ that allows lazy and diligent collusion, if whistleblower contract exists, there is a unique equilibrium that all watchtowers are independently diligent.
\end{corollary}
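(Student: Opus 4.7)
The plan is to assemble the corollary from the three ingredients already in place: Theorem~\ref{thm:pod-dominant} (diligent dominates in \podgame), Theorem~\ref{thm:dc} (no pure strategy NE in \dcgame\ under the stated condition), and the preceding lemma (the whistleblower reward destroys the lazy-collusion equilibrium). The extended game can be modeled in stages: first each watchtower decides whether to initiate or join a collusion contract (lazy or diligent), and if no all-party collusion forms the play reverts to \podgame; otherwise the sub-game induced by the chosen contract is played, with the additional action ``report'' now available. I would walk through the resulting reduced game by backward induction on these stages.

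First I would show that no pure strategy profile in which a lazy-collusion contract is formed can be a Nash equilibrium. By the lemma, once $R_w$ satisfies Eq.~\ref{eq:whistleblower}, any single colluder strictly improves its payoff by secretly reporting, so the ``all obey'' profile is not an equilibrium of \lcgame\ plus whistleblower; moreover, since reporting is unattributable, no choice of the collusion deposit $t$ can distinguish the deviator, so the deviation remains profitable. This eliminates branch (1) of the original two equilibria identified in Section~\ref{sec:lazy-collusion}. Next I would handle diligent collusion: by Theorem~\ref{thm:dc} the sub-game \dcgame\ already has no pure strategy Nash equilibrium, and adding the ``report'' action can only add more deviations, so no profile built around a diligent-collusion contract is a NE either; I would briefly verify, following the lemma's second half, that whistleblowing together with the obey/betray/cheat trichotomy still leaves no pure strategy fixed point.

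With both collusion branches ruled out, every pure strategy NE must correspond to a profile in which no collusion contract is jointly formed; the continuation game is then exactly \podgame. Applying Theorem~\ref{thm:pod-dominant} and its corollary, the unique surviving pure strategy profile is the all-diligent one. Finally, I would confirm that this profile is indeed an equilibrium of the enhanced game: given that everyone else is independently diligent, no single watchtower benefits from unilaterally proposing a collusion contract (it will not attract $n$ joiners and therefore dissolves with the deposit returned), and reporting is moot since there is no collusion to expose.

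The main obstacle is bookkeeping rather than any deep technical difficulty: the extended action space (initiate-lazy, initiate-diligent, join, act-independently-diligent/lazy, report) creates many possible profiles, and I must check that the absence of a successful collusion reverts the continuation exactly to \podgame\ without introducing hybrid profiles (e.g., a partial collusion that still attracts some players). The cleanest way to handle this is to note that both collusion mechanisms are specified to require full participation, so any ``partial'' branch collapses to independent play for the non-joiners, and Theorem~\ref{thm:pod-dominant} then forces diligence on them; given diligent independents, the payoff calculations in Theorem~\ref{thm:lc-dominant} and Theorem~\ref{thm:dc} cease to favor joining, closing the argument.
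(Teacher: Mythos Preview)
Your proposal is correct and follows essentially the same approach as the paper: use the lemma to eliminate the lazy-collusion equilibrium via the whistleblower's profitable deviation, invoke Theorem~\ref{thm:dc} (and the second half of the lemma's proof) to rule out any pure equilibrium in the diligent-collusion branch, and then fall back on Theorem~\ref{thm:pod-dominant} to conclude that all-diligent is the unique surviving equilibrium. If anything, your treatment is more careful than the paper's, which states the corollary as an immediate consequence of the lemma and the informal observation that ``the leader may not choose to initiate lazy collusion in the first place and the full game reaches back the state where diligent strategy is the only Nash equilibrium,'' without explicitly verifying the partial-collusion and no-initiation cases you outline.
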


\paragraph*{Cryptoeconomic security and parameter selection.} 
To provide a clear benchmark for evaluation and decision-making for different security needs, we discuss how to choose parameters that ensure both cryptoeconomic security and compatible incentives.

First, we normalize the execution transaction costs to $c_T = 1$. As an example, assume there are $n=10$ watchtowers with equal stakes; then $\phi(\alpha_i) = \phi(\alpha_0) \simeq 0.2$ (Eq.~\ref{eq:cond}) when $\theta = 0.9$. Eq.~\ref{eq:cond1} then gives the bound on rewards: $R_B > 5$ and $R_C>1$, which are affordable as the normal-path incentives the rollup operator needs to provide. Eq.~\ref{eq:cond2} calculates the minimum stake $\alpha_0 \mathcal{S} > 100009$, assuming that $c_V=100000 \gg c_T$.

To induce lazy collusion, a leader in \lcgame will set the stake $t > 18514$ according to Eq.~\ref{eq:t2}. In \lcgame, for any $t \ge 0$, the condition in Theorem~\ref{thm:dc} always holds. To eliminate all collusion strategies, according to Eq.~\ref{eq:whistleblower}, the reward for the whistleblower should be set to $R_w > 120572$. 

If we increase the number of watchtowers to $n = 100$, we see that the bounty $R_B$ increases accordingly to at least $44$, which is still low. The minimum stake required for each watchtower does not change significantly, and $R_w$ decreases to $102281$. 

Under the rational adversary assumption, our protocol guarantees a unique pure strategy Nash equilibrium where all watchtowers are diligent. Beyond this, cryptoeconomic security requires that when an attack occurs, the cost of launching the attack exceeds the maximum profit. Therefore, when designing the actual parameters (e.g. $n$, $\mathcal{S}$) for a practical system, we can utilize signals on the inherent value of transactions~\cite{deb2024stakesure} to adapt the security requirements of watchtowers. For example, considering the current average transaction fee on Ethereum is approximately \$3 and the average L2 batch size is around 200, we simplify the model by assuming all transactions are executed on L1 to resolve disputes, which incurs the $c_V \approx \$600$ and $c_T \approx \$0.006$. (Using other more complex dispute resolution methods can reduce this cost.) Then the normal path reward for watchtowers can be estimated as:
\[
R_B > 5 \times \$0.006 = \$0.03 \text{ per batch}.
\]
Next, we calculate the number of batches produced by some L2 chains per year. For example, Optimism processes approximately 400,000 transactions per day, translating to about 2,000 batches per day or 700,000 batches per year. Given the probability that a watchtower wins the bounty is 0.2 and the annual percentage yield (APY) from external investment vehicles is around 6\%, the condition to incentivize stakers with the estimated return rate is:
\[
\frac{(5-1) \times 0.006 \times 0.2 \times 700{,}000}{600 + \text{tx value}} > 6\%.
\]
In other words, a watchtower would be willing to secure approximately \$56,000 worth of transactions. And if the application aims to incentivize watchtowers to secure higher value transactions, the rewards should be increased accordingly.

Additionally, the minimum stake that each watchtower needs to post should include the transaction value. Notably, $n$ determines the normal operational overhead of our protocol, which does not directly determine the security. It can be chosen with a trade-off between stake decentralization and operational cost.

\section{Implementation and Evaluation} \label{sec:impl}
\subsection{System design} \label{sec:system}

The protocol described so far is a general framework for proving diligence in a computing platform with anytrust guarantees. We describe an implementation of these proofs subsequently in the context of optimistic rollups (ORs) as the compute platform and Eigenlayer~\cite{eigenlayer} as the underlying staking platform under a non-collusion setting. The complete system is termed a watchtower network and serves the following implementation goals:
\begin{enumerate}
    \item {\em Low compute overhead}: Watching an OR state involves executing all transactions of the rollup. Overhead is termed as any resource cost on top of the bare minimum rollup state execution. Our implementation minimizes this overhead to lower a watchtower's resource costs.  
    \item {\em Modular implementation}: The rollup ecosystem has a lot of tech stacks for full nodes ranging from general OP-stack to specialized implementations for DeFi, such as LayerN. Our modular implementation can be used on any rollup stack with minimal modifications. 
    \item {\em Low gas fees:} Large gas fees on settlement layers such as Ethereum can make watching prohibitively expensive. Our implementation scales down L1 gas costs and makes it an adjustable feature for the rollup. 
\end{enumerate}

The implementation is split across the functional domains of the rollup, settlement, payment, and staking layers. For simplicity, we assume the settlement, payment, and staking layers sit on the same ledger. However, it can be easily expanded to independent networks if desired. Figure \ref{fig:implementation_overview} shows the binding of the functional components with the two architectural components: the Watchtower client running on a server and smart contracts running on Ethereum. We outline the details of the two architectural components below.
Our implementation draws from the modules in section \ref{sec:system} to build a watchtower on the Optimism Bedrock stack \cite{The_Optimism_Collective_The_Optimism_Monorepo}. We test the implementation on the Ethereum Goerli testnet to watch OP-goerli and Base-goerli. We evaluate the system as per our quantitative implementation goals as described in section \ref{sec:system}: {\em Compute overhead}, and {\em gas costs}. We adapt the modules to fit the existing rollup stacks and deploy them using an update-optimized architecture to make them evolve with the rollup ecosystem. We go over these details in Appendix~\ref{sec:system-details}. 

\begin{figure}
    \centering
    \includegraphics[width=0.8\textwidth]{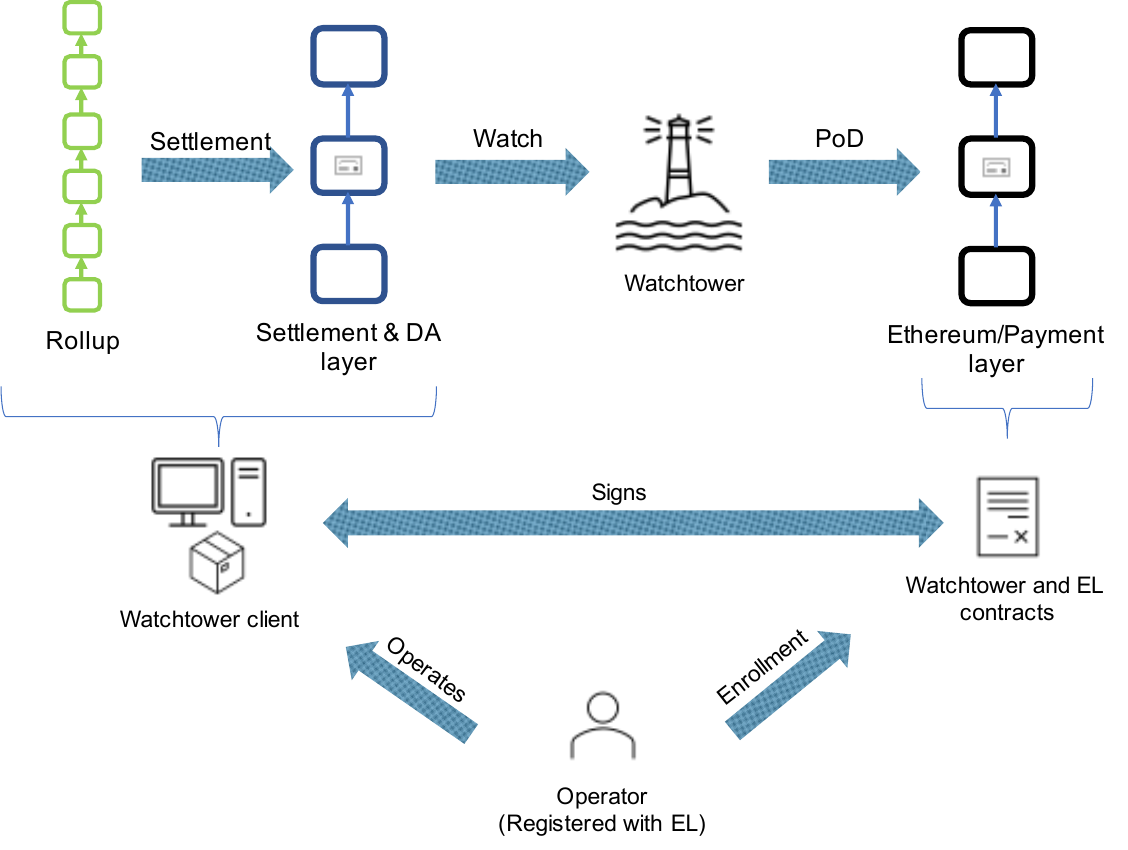}
    \caption{Watchtower client executes the rollup and observes the commitments on settlement layer, it posts bounty and flags on the payment/stake layer}
    \label{fig:implementation_overview}
\end{figure}

\subsection{On-chain Contracts}
The contracts are written in Solidity and deployed on Ethereum Goerli via a UUPS proxy architecture \cite{uupsEIP} to enable future updates. We deploy three contracts derived from section \ref{sec:system}: {\em OperatorRegistry, BountyManager}, and {\em AlertManager}. The deployed contracts can be found at \cite{deployed_contracts}.

BountyManager contract implements a hash minimum across watchtowers to ensure a single bounty winner within an epoch. The payment pool and Rollup registry contract are replaced with a simple bounty count to measure contribution and a chain-id to point to a rollup. Dispute resolution contracts will utilize L2 fraud proofs once they evolve. 

\noindent {\bf Optimizations}: Immutable data like {\em diligenceProof} are stored as calldata variables to avoid storage costs. Proof outputs as set to a fixed size using {\em keccak256} to ensure consistency in storage requirements. Hash minima is calculated by the winning party to balance gas costs. We use {\em Mappings} instead of arrays to store hash data to reduce gas costs. We utilize audited and optimized ECDSA OpenZeppelin libraries \cite{Openzepplin_contracts} to verify the authenticity of signed proofs to reduce contract risk. The contracts emit {\em NewBountyClaimed} and {\em NewBountyRewarded} for efficient notifications to off-chain clients.  

\noindent {\bf Gas usage:} A bounty mining transaction consumes 380K gas on average as shown in figure \ref{fig:gas_usage}. As a reference, this is of the same order as a token swap operation on Uniswap. Two significant contributors to gas usage are (a) Proof storage for reward reimbursement and (b) Verifying the authenticity of proofs. This gas fees can be reduced in the future by introducing an appropriate aggregation layer for submitting bounty mining transactions. 

\begin{figure}
    \centering
    \includegraphics[width=0.8\textwidth]{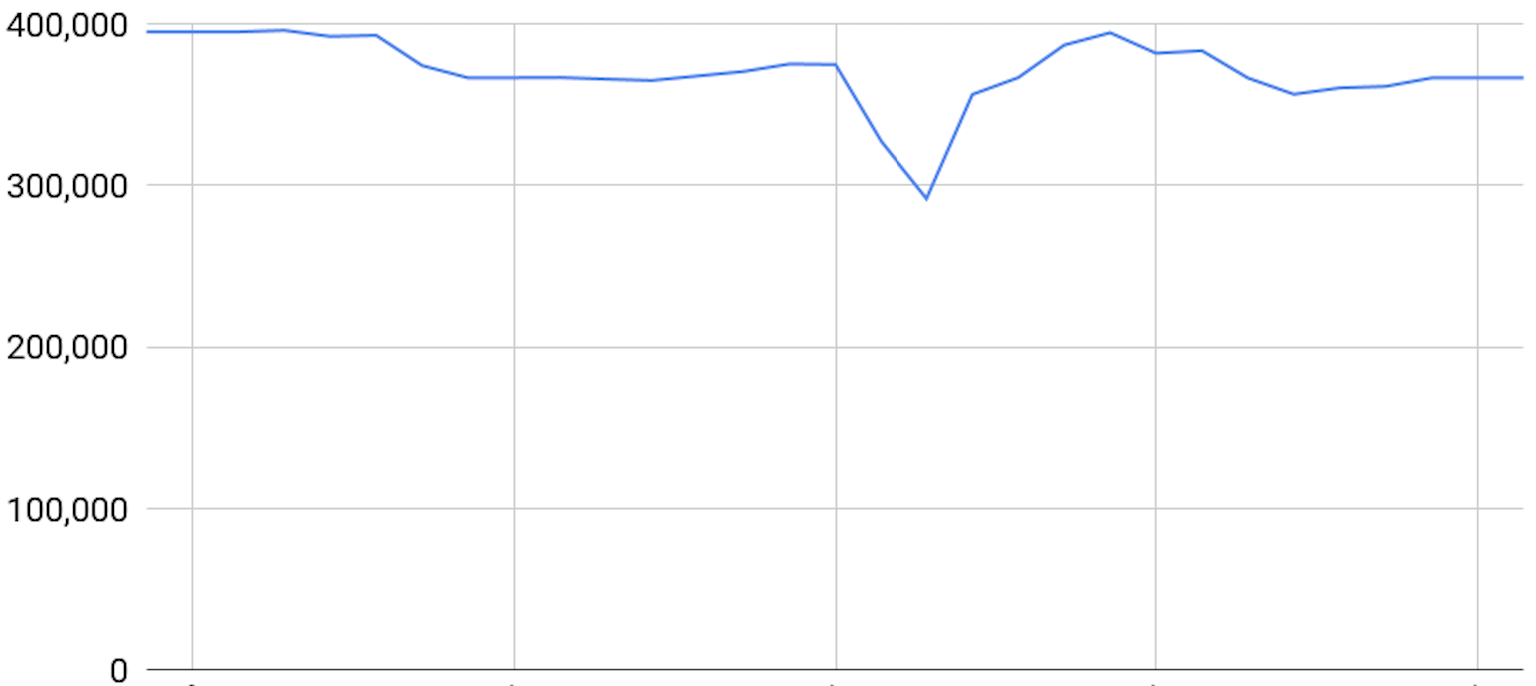}
    \put(-240,40){\rotatebox{90}{Gas used}}
    \put(-202,-5){0}
    \put(-154,-5){7}
    \put(-106,-5){14}
    \put(-58,-5){21}
    \put(-10,-5){28}
    \put(-120,-15){Days}
    \caption{Gas usage of {\em MineBounty} operation over 4 weeks of deployment}
    \label{fig:gas_usage}
\end{figure}

\subsection{Off-chain Client}
The off-chain clients are implemented to support OP-Bedrock as a rollup execution engine. Two clients implemented in Golang enable the Watchtower client. A bounty mining client contains the bounty mining and transaction generator modules. This client is supported by a State extraction client that contains the settlement layer module with an RPC connection to a full node. Alert management and reconfiguration manager modules are left as future work to be implemented once the rollup stack evolves. We describe the implementation details of the Go clients below:

\noindent {\bf Optimizations:} We employ an event-based trigger to the PoD generation. The watchtower client listens to emitted events on the {\em L2OutputOracle} Contract to receive real-time data updates from the contract; this is much more efficient than polling mechanisms. Generating execution trace requires storing the state roots after each transaction; this operation involves re-execution and hence is limited to just the penultimate block in an L2 epoch to avoid re-executing the whole epoch. 

\noindent {\bf Resource utilization:} The L2 full node and watchtower client are implemented on different machines. We run the L2 full node on a machine with 4 cores and 16GB of RAM and the watchtower client on a machine with 2 cores and 4GB RAM. Table \ref{tab:client_usage} lists the resource costs in running the watchtower client. We observe that the client consumes minimal resources natively and has a minimal resource usage overhead on the L2 node. The client has no I/O since its context is stored in memory.

\begin{table}[]
    \centering
    \begin{tabular}{@{} |c|c|c|c|c|}
        \hline
        {\bf L2 node} & {\bf CPU(\%)} & {\bf Mem(MB)} & {\bf I/0(KB/s)} & {\bf NW(KB/s)} \\
        \hline
        {\bf Before mining} & 0.7-25 & 5306 & 700 & 10-150 \\
        {\bf During mining} & 0.7-25 & 5307 & 700 & 10-150 \\
        \hline
        {\bf Watchtower client} & {\bf CPU(\%)} & {\bf Mem(MB)} & {\bf I/0(KB/s)} & {\bf NW(KB/s)} \\
        \hline
        {\bf Before mining} & 0.1 & 15 & 0 & 1 \\
        {\bf During mining} & 0.1 - 10 & 15 & 0 & 1 \\
        \hline
    \end{tabular}
    \caption{Watchtower client usage stats}
    \label{tab:client_usage}
\end{table}

We observe CPU usage burst to 10\% on the watchtower client when the rollup proposer on L1 output Oracle proposes a new block. We observe a similar burst of 25\% on the L2 goerli node when op-node sends a block to op-geth to execute the transactions and update the state. 

We can utilize this capacity for additional off-chain computing to redistribute some on-chain contract operations to an off-chain module. A proposed approach is to perform proof aggregation on-chain; this implementation is left for future work. 
\section{Discussion}
\label{sec:discussion}

Proof of Diligence ensures that watchtower network executes all transactions on the rollup diligently. Further improvements down the line will enhance security, enable generalized applications, and allow for efficient trade-offs between delay and stake. We describe such improvements below: 

\subsection{Enabling Cryptoeceonomically Secure Watchtower Applications}

The design described so far ensures that watchtowers are independently verifying transactions on rollups. The verification results can be utilized for attesting to any event on the rollup. These attestations are cryptoeconomically secured by the watchtower's stake locked with EigenLayer. We summarize the design here: 

\begin{itemize}
    \item {\bf Configurable execution event trace}: Applications can subscribe to the Watchtower network to get verifiable updates on their transactions' life cycle. The events emitted from these transactions will be added to the bounty to ensure the execution and can be challenged for cryptoeconomic security. 
    \item {\bf Application event tracing}: Watchtowers can trace the whole life cycle of transactions pertaining to an application, starting from being sequenced by the sequencer to being ordered on L1 to being asserted into the state. A different level of cryptoeconomic security will accompany each of these stages. 
    \item {\bf Dispute resolution and cryptoeconomic security}: Events pertaining to the subscribed application are attested by the watchtower and are bound to be included in the next bounty. As enforced by the proof of diligence, other watchtowers will ensure that these attestations are correct. If these attestations are exchanged in private, An application/agent consuming this attestation can contest its correctness in the future by showing a mismatch between the watchtower's attestation and the mined bounty.  
\end{itemize}

Besides applications on the rollup, the incentivized watchtower network holds significant potential for broader applications in general verifiable computing. Our future work will explore how the Proof of Diligence protocol can be extended to various domains such as AI inference~\cite{bhat2023sakshi}, cloud computing~\cite{dong2017betrayal}, and blockchain light client protocols~\cite{moshrefi2024unconditionally}. For example, to monitor and verify AI inference tasks, each watchtower can independently re-compute the inference results from the provided input data and model parameters, raising an alert if discrepancies are found. A recent work~\cite{moshrefi2024unconditionally} demonstrates the use of watchtowers as a monitoring service to secure proof of stake blockchain states for resource-limited light clients, ensuring any invalid states are detected. This approach can be extended to a wider range of blockchain applications where light clients are prevalent. Watchtowers provide a robust layer of security and accountability.

\subsection{Enhancing Security}

The current system design ensures that Proof of Diligence works under a static adversary that can form static collusion. Resistance against a stronger dynamic adversary requires assumptions on rational independence of watchtowers and the privacy of whistleblower contracts. These assumptions can be enforced through system design by enabling random rotation of watchtowers in the pool and ensuring the privacy of the whistleblower. 

{\bf Watchtower rotation}. 
The rotation of watchtowers across different rollups in the pool is essential for security against an adaptive adversary. Watchtowers can be periodically rotated in small batches across rollups in a random and staggered manner reminiscent of the cuckoo rule \cite{cuckooDHT}. The rotation can be made more efficient by utilizing  two techniques: 
(a) {\em utilizing modularity}:  we are designing an efficient reconfiguration protocol for watchtowers rotating between rollup two rollups sharing similar modules - such as two rollups running the OP stack; 
(b) {\em stateless clients}: watchtower rotation through the reconfiguration manager can be made very efficient by removing the need to transfer state. The witness chain team is developing a stateless client architecture that removes the need to download state when reconfiguring to a new rollup.  

{\bf Private Whistleblower contracts}.  
The interactions of the whistleblower with the whistleblower contract are private to ensure that they can't be used within the collusion contract. We are designing a system to ensure these inputs stay private to the collusion contract by deploying a whistleblower contract upon request post the bounty mining period. This ensures that the address of the whistleblower contract is not static and cannot be referenced in the collusion contract. Alternate design solutions include privacy-enhancing contract structures such as Aleo. 



\bibliography{sample}
\appendix


\section{System Design Details}
\label{sec:system-details}
\subsection{Watchtower Smart contracts} 


Watchtower network's security is achieved through a set of smart contracts that give out incentives and enforce slashing rules to enable honest behavior. These smart contracts utilize the Eigenlayer restaking framework for the underlying stake as its sybil resistance mechanism. Each node in the watchtower network is a registered EL operator. The operator enrolls into the 
watchtower contracts, runs the watchtower client, and calls various contracts dictated by the rollup. Watchtower network consists of the following contracts.

\begin{itemize}
    \item {\bf OperatorRegistry contract}: Registers EL operators to the watchtower network and links them to a rollup. The contract consists of the following methods:
    \begin{enumerate}
        \item {\em RegisterOperator(OperatorID, RollupID, RotationProtection)}: Registers an operator to the given rollup. The operator will be rotated if the rotation protection is turned off
        \item {\em RotateBatch(BatchCommitment)}: Updates allocation of watchtowers to rollups, does not rotate operators with rotationProtection turned on. 
    \end{enumerate}
    \item {\bf RollupRegistry contract}: Maintains the list of rollups enrolled in the network and their programmed agreement. It can be invoked by a whitelisted set of rollup managers. The contract consists of the following methods:
    \begin{enumerate}
        \item {\em RegisterRollup(ID, termParams)}: Registers the rollup if called by the whitelisted account. termParams is a vector consisting of terms of the watchtower agreement such as geographical decentralization, bounty amount, and adjusted difficulty. 
        \item {\em RefillBounty(ID, amount)}: Adds to the bounty pool used for incentivizing mining 
        \item {\em UpdateTerms(ID, termParams)}: Updates the terms of the rollup's watchtower agreement 
    \end{enumerate}
    
    \item {\bf Bounty manager contract}: Holds funds for a rollup and disperses it when a bounty is mined. The bounty mining method ensures that all nodes are given predictable rewards in expectation with low transaction costs. It consists of the following methods: 
    \begin{enumerate}
        \item {\em SetBountyDifficulty(ID)}: Internal function that sets the mining difficulty based on the terms in RollupRegistry contract and history of bounties mined 
        \item {\em MineBounty(ID, sig($r_E,r_S$))}: Called by a watchtower to claim that it has won the bounty 
        \item {\em claimBounty()}: Called by a watchtower to claim rewards, it transfers the bounty to the watchtower's account. 
    \end{enumerate}
    
    \item {\bf Alert manager contract}: Sets an alert if a rollup's state asserted on L1 does not match the executed state on L2 as claimed by a watchtower. It sets two stages of a alert - raised - followed by either confirmed or canceled. It consists of the following methods:
    \begin{enumerate}
        \item {\em raiseAlert(ID, $r_S$)}: This alert is raised either when the bounty is mined with a state root different than the one asserted on the settlement layer $r_S$ or when a watchtower manually calls this function. Sets the alert ID based on input params. 
        \item {\em confirmAlert(alertID)}: This function is called when the rollup reorgs with a different state root. The alert is set to confirm, and the watchtower that raised the alert is rewarded using the payment pool. 
        \item {\em cancelledAlert(alertID)}: Cancels a alert after social consensus time $T_s$ has passed since $r_S$ and a reorg has not happened on the rollup. The watchtower that raised the alert is slashed using the Dispute Resolution contract. 
    \end{enumerate}
    
    \item {\bf Payment pool contract}: Maintains the funds corresponding to the bounty and alerting rewards. Sets stricter access control on what accounts can call it and modify its rules. It consists of the following internal functions:
    \begin{enumerate}
        \item {\em transfer(RollupID, OperatorID)}: Transfer funds for a rollup's pool to an operator. This function is called by claimBounty and confirmAlert methods. 
        \item {\em replenishPool(RollupID)}: Adds funds to a rollup's pool, called by the UpdateTerms contract. 
    \end{enumerate}
    
    \item {\bf Dispute resolution contract}: Handles disputes concerning alerts and bounties. This contract will then wait for a veto from the watchtower committee; if not vetoed, it will call Eigenlayer's slasher contract - where it will be subject to a second veto. It consists of the following functions:
    \begin{enumerate}
        \item {\em AlertDispute(flagID)}: Raised when a cancelledAlert is issued in the alert manager contract. It checks the status of the corresponding state assertion on the rollup's settlement layer contract. 
        \item {\em BountyDispute([sigs])}: Called when a set of mined bounties with same $r_S$ have different $r_E$. All but one $r_E$ will be declared correct by the watchtower committee, and the other bounties will be canceled. 
    \end{enumerate}
\end{itemize}

These contracts are integrated with Eigenlayer's contract suite as demonstrated in figure \ref{fig:contract_implementation}. The contracts interact with watchtower clients running on a container on the operator's machine. We outline its details next.

\begin{figure*}
    \centering
    \includegraphics[width=0.9\textwidth]{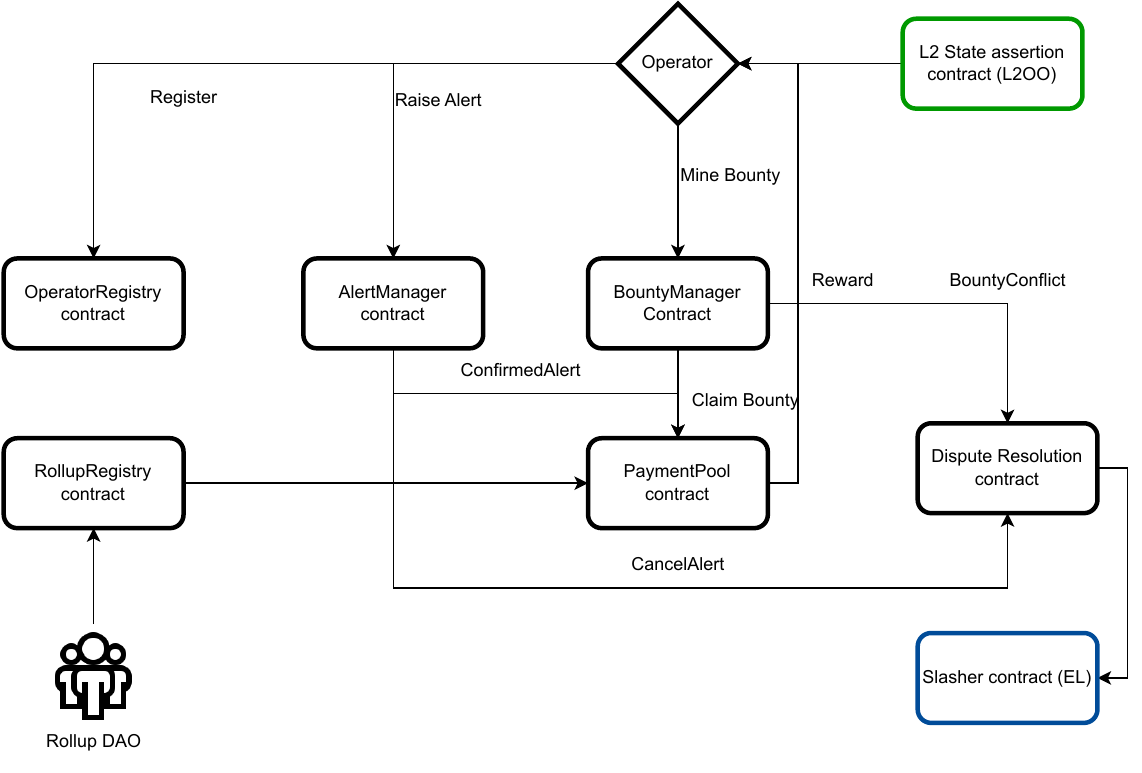}
    \caption{Integration of watchtower network contracts with L2 (green) and Eigenlayer (blue) contracts}
    \label{fig:contract_implementation}
\end{figure*}

\subsection{Watchtower Client}

Watchtower client performs three basic functions: It executes a rollup's state, observes the rollup's state assertion on its settlement layer, and mines for the bounty. It consists of the following modules that enable these functions:

\begin{itemize}
    \item {\bf Rollup execution engine}: This module subscribes to the rollup sequencer and its L2 outbox contract to fetch the sequence of transactions and execute them to compute the state locally. It maintains the following functionality:
    \begin{itemize}
        \item {\em Sequence fetcher:} Fetches the sequence of transactions. It runs in two stages: A fast forward mode - fetches sequence directly from the sequencer, and a conservative mode - fetches sequence from the L2 outbox contract on the settlement/DA layer. 
        \item {\em State root generator}: Generates state roots at a requested point in the sequence during execution. 
        \item {\em Event trace generator}: Generates event trace consisting of state roots at deterministic intermediate points of the network. 
    \end{itemize}
    \item {\bf Settlement layer light client:} This module watches the state posted by the asserter and compares output from multiple settlement layer full nodes. It runs a light client of the settlement layer to fetch the state information concerning the  assertion event. It maintains the following functionality: 
    \begin{itemize}
        \item {\em Trustless light client adapter}: Fetches API requests from multiple settlement layer full nodes to remove the trust assumption from a single full node. The operation depends on the design of the settlement layer.
        \item {\em Event watcher}: Watcher for events pertaining to L2 contracts on the settlement layer. 
    \end{itemize}
    \item {\bf Bounty miner:} This module fetches an execution trace from the rollup execution engine and checks whether the watchtower has mined a bounty. It calls the transaction generator if a bounty is mined. It maintains the following sub-modules:
    \begin{itemize}
        \item {\em Accumulator generator:} Generates a Merkle root of the execution trace fetched from the rollup execution engine 
        \item {\em Bounty checker:} Signs the hash of execution trace accumulator and state root. If the signature is below a threshold, it assembles a bounty-mining transaction and calls the transaction generator. 
    \end{itemize}
    \item {\bf Alert management module}: This module checks the state output of the rollup execution engine and the settlement layer and calls the transaction generator if there is a mismatch. It maintains the following sub-modules: 
    \begin{itemize}
        \item {\em State checker:} Fetches state assertions from settlement layer LC and State root from rollup execution engine, calls transaction generator to raise an alert if it has not been raised yet. 
    \end{itemize}
    \item {\bf Reconfiguration manager}: This module handles the rotation of the operator node to a separate rollup and state transfers to incoming nodes. It maintains the following sub-modules:
    \begin{itemize}
        \item {\em Fetch rollup modules}: Fetches new modules required for the rollup from the repo 
        \item {\em Fetch rollup state}: Fetches the latest state of the new rollup the node is assigned to 
        \item {\em Fetch network configuration}: Sets up peer connections through seed nodes 
        \item {\em Send latest state}: Send the latest state and the following transactions to a new watchtower joining the rollup. 
    \end{itemize}
    \item {\bf Transaction generator}: This module manages the keys of the operator and posts transactions to watchtower contracts as requested by the bounty miner or the alert management module. It maintains the following functionality:
    \begin{itemize}
        \item {\em Signer}: Signs the requested transaction from flag management and bounty miner after performing simple sanity checks. 
        \item {\em Alert}: Alerts the operator when a signature request is flagged by its logic; this may involve claiming an alert on a historically stable rollup. 
    \end{itemize}
\end{itemize}
\begin{figure*}
    \centering
    \includegraphics[width=0.9\textwidth]{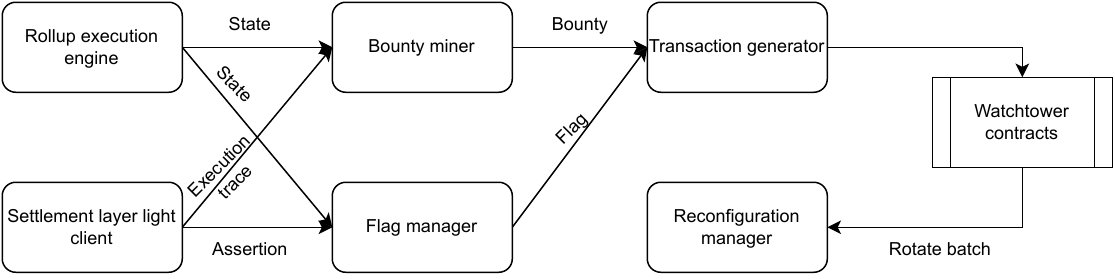}
    \caption{Watchtower client flow}
    \label{fig:watchtower_client_flow}
\end{figure*}

The modular design described above ensures that the watchtower service running PoD can quickly adapt to a new rollup. Moving from rollup A to rollup B would primarily involve switching the rollup execution engine from rollup A's full node to rollup B's full node. Moreover, the additional computation tasks needed apart from running a rollup full node are minimal and captured almost entirely under the Bounty miner module. This cost can be scaled up or down by adjusting the execution trace size used by the accumulator generator.

The high-level flow of a watchtower client is captured in figure \ref{fig:watchtower_client_flow}. The settlement layer light client observes a state assertion event; this triggers the bounty miner module to fetch the state assertion $r_S$ and execution trace $E$ from the rollup execution engine and check if a bounty is mined. If a bounty is mined, the bounty miner will call the transaction generator, which will post a transaction on the Bounty manager contract. Simultaneously, the alert management module compares the two state roots; if a discrepancy is detected, it will call the transaction generator to send a transaction to the alert manager contract. 
\end{document}